\theoremstyle{definition}
\newtheorem{definition}{Definition}
\newtheorem{fallacy}{Fallacy}
\theoremstyle{remark}
\newtheorem{remark}[definition]{Remark}
\newtheorem{example}[definition]{Example}
\newtheoremstyle{mytheorem}{0.5cm}{0.2cm}{\slshape}{ }{\bfseries}{.}{ }{}
\theoremstyle{mytheorem}
\newtheorem{theorem}{Theorem}
\newtheorem{lemma}{Lemma}
\renewcommand{\phi}{\varphi}
\newcommand{\edge}[1]{\ar@{-}[#1]}
\newcommand{\lulab}[1]{\ar@{}[l]_<<{#1}}
\newcommand{\rulab}[1]{\ar@{}[r]^<<{#1}}
\newcommand{\ldlab}[1]{\ar@{}[l]^<<{#1}}
\newcommand{\rdlab}[1]{\ar@{}[r]_<<{#1}}
\begin{document}

\title{Scenarios and their Aggregation in the Regulatory Risk Measurement Environment}

\author{Andreas Haier %\thanks{FINMA, Bern, Switzerland ({\tt andreas.haier@finma.ch})}\ \,
    and Thorsten Pfeiffer\thanks{Both FINMA, Bern, Switzerland.
    The authors wish to point out that they express solely their personal
    beliefs. \newline This manuscript was submitted to an Actuarial Journal on August 16, 2012.\newline
    ({\tt andreas.haier@finma.ch, thorsten.pfeiffer@finma.ch})}}
    
\date{August 16, 2012}

\maketitle

\begin{abstract}
We define scenarios, propose different methods of aggregating them,
discuss their properties and benchmark them against quadrant requirements.

  \medskip

  \noindent
  \emph{Keywords}: quadrant requirements, available capital,
  scenarios, scenario aggregation, risk measurement

  %\noindent{AMS Classifications}: 60E05; ????
\end{abstract}

\section{Terminology}

We assume that the \emph{available capital} of an insurance company
can be described as a function of observable real variables, called
\emph{risk factors}.

Examples of risk factors are basic economic variables such as interest
rate (yield curve), share index, real estate, FX (foreign exchange) and
corporate spreads.

The dependency of the available capital on these risk factors is
described by the \emph{valuation function} $V$  of an insurance company: %of a financial object
\[V: \mathbb{R}^{n} \rightarrow
\mathbb{R}\] where \[ (x_1, \dots , x_n) \mapsto V(x_1, \dots ,
x_n).
\]

We require $V$ to be \emph{measurable} with respect to the Borel
$\sigma$ algebras $(\mathfrak{B}_{n}, \mathfrak{B})$, and we
consider $\mathbb{R}^{n}$ together with a probability measure $P$
with respect to the Borel $\sigma$ algebra $\mathfrak{B}_{n}$. This measure can be
thought of as representing the distribution of the risk factors at a
given future point in time. Define $P_{V} = V_* P$ as image measure
on $\mathbb{R}$; this describes the distribution of the risk bearing
capital. Sometimes we us the notion \emph{risk bearing capital} as
synonymous to the one of available capital.

\section{Risk measurement vs risk measurement for regulatory purposes} \label{Riskmeasure}
Risk measurement requires the following steps to be performed:
\begin{enumerate}
\item Identifying a suitable set of risk factors
\item Making an assumption about the distribution of risk factors (i.e.\ choosing $P$)
\item Describing the dependency of available capital on the risk factors (i.e.\ defining $V$)
\item Analysing $P_{V}$
\end{enumerate}

Steps $3.$ and $4.$ are clearly entity specific.
In a regulatory environment, steps $1.$ and $2.$ may be subject to minimum requirements
from the supervisor to guard against lack of awareness or misjudgements by the company's management.

Regarding step $1.$, there are basic economic variables like
interest rates, which may have a major impact on most entities.
Therefore supervisors require them to be considered. Additional risk
factors, which are important only to a specific entity, may be
included in the analysis by the affected entities (e.g.\  exotic
asset classes).

Regarding step $2.$, for purposes of equal treatment of all
supervised entities, it is important to note that the distribution
of the ``basic'' risk factors prescribed by the supervisor for
regulatory purposes is identical for all supervised entities in one
jurisdiction, although there may be different views on these
distributions.

For example, interest rates in one year's time will be the same for
all entities, although they may affect different entities to a
different degree. So, the distribution of the risk factor interest
rate should be the same for all entities; what is different, though,
is how the individual $V$ acts on that distribution (step $3).$
In short, whilst $P$ should be driven by (minimum)
requirements from the supervisor, $P_{V}$ is individually determined
based on $V.$

In a framework for determining \emph{economic capital}, which
is a fully internal exercise, steps $1.$ to $4.$ are clearly \emph{not}
subject to regulatory requirements. Sometimes, this leads to a
misunderstanding, which we formulate in
\begin{fallacy}\label{Fallacy1}
We (company management) have our own view on the interest rates in one year's time which
we use as well for our risk management and economic capital purposes, and we want to use
it for the regulatory capital requirements as well.
\end{fallacy}

Following such an approach in general would lead to regulators being unable to prescribe
the estimation of the future behaviour of risk factors. The fallacy arises because,
by definition, regulators set regulatory requirements.

\section{Mathematically elegant way of setting requirements on risk factors} \label{Mathematicallyelegant}

Usually, supervisors are interested in the behaviour of entities
under extreme events. In this section, we introduce a very simple
toolbox which can be used to construct and describe extreme events.

\begin{definition}\label{quadrants}
For each non-zero linear form $\lambda : \mathbb{R}^{n} \rightarrow
\mathbb{R}$ and $c \in \mathbb{R}$, a set of the form
$\lambda^{-1}([c,\infty))$ is called an \emph{affine half-space}. A
\emph{quadrant} is a non-empty intersection of a finite set of
affine half-spaces.
\end{definition}

By definition, every quadrant is an element of $\mathfrak{B}_{n}$,
and $\mathfrak{B}_{n}$ is generated by the set of all quadrants.

\begin{example}
Any affine subspace of $\mathbb{R}^n$ is a quadrant. In particular,
a subset consisting of one point only is a quadrant.
\end{example}

\begin{definition}
Let $A_Q$ be a quadrant and $p_Q \in [0,1]$. A \emph{quadrant
requirement} is a couple $Q=(A_Q,p_Q)$. $P$ is said to \emph{fulfill
the quadrant requirement $Q$} if it satisfies the following
condition
\[
P(A_Q) \geqq p_{Q}
\]
\end{definition}

\begin{remark}
Even if it would be possible to formulate requirements on more general sets than quadrants,
we feel that the setting chosen above will be sufficient for regulatory purposes.
\end{remark}

In setting quadrant requirements, supervisors can express their
judgement on the future behaviour of risk factors. They can ensure that
the measures $P$ used by the entities have sufficient weight in the
tail of the common distribution of the risk factors. Suppose, for example,
a supervisor believes that the following statement is relevant for
risk measurement for regulatory purposes:

\begin{quote}
``In one year's time, the interest rate of the ten year's Swiss
frank is less or equal to 0.5\% with a probability of 1\%.''
\end{quote}

He could translate this statement to the following quadrant
requirement:
 \begin{quote}
 ``$P(i_{10} \leqq .5\%) \geqq 1\%$'',
 \end{quote}
  where the variable $i_{10}$ describes the return rate of the ten year's Swiss frank. This quadrant requirement should be the same and equal for all entities under supervision, what is different is the impact of that requirement on the individual insurer's available capital. Suppose a Company~$1$ which is completely hedged against movements of the ten year Swiss rate, thus $\Delta V_{1}(i_{10})=0$, where $\Delta V$ describes the change of the available capital. A Company~$2$, which refrains from hedging, is likely to have the result $\Delta V_{2}(i_{10})<0$.

\begin{definition}

A set of quadrant requirements $M$ is an arbitrary finite set of
quadrant requirements such that the number $p_M=\sum_{Q\in
M}{p_Q}\leqq 1$.

\end{definition}

 A set of quadrant requirements is useful for a supervisor to set his requirements on more than one risk factor, and helpful to formalize his judgement on tail dependencies. This does not mean that all companies in one jurisdiction have to use the same $P$, it only means that the set of acceptable $P$ is restricted by the supervisor.

 Quadrant requirements have also the appealing property that it is easy to check in an objective and reproducible manner whether or not they are fulfilled. Indeed, subjective judgement is reduced to such an extent that an independent third party or law court could easily double check the supervisor's assessment and come to the same result.

 There might be as well other regulatory requirements on $P$ such as e.g.\  being ``realistic'', ``state of the art'', and other ``qualitative'' criteria. These are very important to gain a better mutual understanding of $P$ and should be intensively discussed in the regulatory dialogue between the company and the supervisor. The constant challenge with these ``qualitative'' criteria is that, due to their inherent element of subjectivity, an independent third party or law court does not necessarily need to come to the same result as the supervisor. A way out could be transforming the rather subjective ``qualitative'' criteria into objectively testable quadrant requirements.

 We conclude the section with quantifying the statement that ``a supervisor should neither be substantially under- nor over-prescriptive'' in terms of the number of quadrant requirements to be used. Suppose a supervisor wants one and only one $P$ to be used in his jurisdiction. How many quadrant requirements would he need? An answer is given by:

\begin{theorem}
Let $P$ be any probability measure on
$(\mathbb{R}^n,\mathfrak{B}_n)$. Then there exists a countable set
of pairs $(A_{i},p_{i})$ of quadrants $A_{i}$ and real numbers
$p_{i} \in [0,1]$ such that $P$ and only $P$ fulfills all quadrant
requirements $(A_{i},p_{i})$.
\end{theorem}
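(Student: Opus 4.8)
The plan is to design the countable family so that \emph{any} probability measure $Q$ which fulfils all the chosen requirements is forced to agree with $P$ on every half-space, and then to conclude $Q=P$ via the Cram\'er--Wold device. The essential difficulty is that a quadrant requirement imposes only the one-sided inequality $Q(A)\ge p$, whereas identifying a measure needs equalities. The trick for converting ``$\ge$'' into ``$=$'' is to include, alongside a target set $A$, further quadrants whose $Q$-masses are forced upward and whose union is the complement of $A$; if $A$ and its complement are both pinned from below and their masses sum to $1$, equality is forced on both. This works cleanly \emph{precisely} for half-spaces, because the complement of the closed half-space $\lambda^{-1}([c,\infty))$ is the open half-space $\{\lambda<c\}$, which is an increasing union of the closed half-spaces $\{\lambda\le c-1/m\}$, each again a quadrant.

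Concretely, I would fix a countable dense set of directions, namely all non-zero linear forms $\lambda$ with rational coefficients, and all rational levels $c$. For each such pair I include the requirement $\big(\lambda^{-1}([c,\infty)),\,P(\lambda^{-1}([c,\infty)))\big)$, and, for every positive integer $m$, the requirement $\big(\{\lambda\le c-1/m\},\,P(\{\lambda\le c-1/m\})\big)$; here $\{\lambda\le c-1/m\}=(-\lambda)^{-1}\big([-(c-1/m),\infty)\big)$ is a genuine quadrant since $-\lambda\ne 0$. This is a countable collection of quadrant requirements, and $P$ fulfils every one of them by construction, with each $p_i=P(A_i)\in[0,1]$.

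Now let $Q$ be any probability measure fulfilling all of them, and write $H=\lambda^{-1}([c,\infty))$ for rational $\lambda,c$. Directly $Q(H)\ge P(H)$. Since $H^{c}=\{\lambda<c\}=\bigcup_{m}\{\lambda\le c-1/m\}$ is an increasing union, monotone continuity gives $Q(H^{c})=\lim_{m}Q(\{\lambda\le c-1/m\})\ge\lim_{m}P(\{\lambda\le c-1/m\})=P(H^{c})$. Adding these and using $Q(H)+Q(H^{c})=1=P(H)+P(H^{c})$ forces both inequalities to be equalities, so $Q(H)=P(H)$ for every rational pair $(\lambda,c)$. For a fixed rational $\lambda$ the survival function $c\mapsto Q(\lambda^{-1}([c,\infty)))$ is left-continuous and monotone, so agreement on the dense set of rational $c$ upgrades to agreement for all real $c$; hence the law of the real random variable $\lambda$ is the same under $Q$ and under $P$, for every rational $\lambda$.

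To finish, I would translate this into characteristic functions: equality of the law of $\lambda$ under $P$ and $Q$ means $\varphi_{Q}(t\lambda)=\varphi_{P}(t\lambda)$ for all $t\in\R$ and all rational $\lambda$. The set $\{t\lambda:\,t\in\R,\ \lambda\ \text{rational}\}$ is dense in $\R^{n}$, and characteristic functions are continuous, so $\varphi_{Q}=\varphi_{P}$ everywhere on $\R^{n}$; by the uniqueness theorem for characteristic functions, $Q=P$. The step I expect to be the main obstacle is the conversion of one-sided into two-sided constraints: the complement argument succeeds for half-spaces but \emph{fails} for quadrants that are intersections of two or more half-spaces, whose complements are unions rather than increasing limits of single quadrants. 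Recognizing that restricting attention to half-spaces and then appealing to Cram\'er--Wold (rather than trying to fix corner probabilities directly) is exactly what circumvents this obstruction.
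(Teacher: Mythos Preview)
Your argument is correct. The one-sided-to-two-sided trick via the increasing union $\{\lambda<c\}=\bigcup_m\{\lambda\le c-\tfrac1m\}$ is clean, the left-continuity step is fine, and the density/continuity argument for characteristic functions is exactly the Cram\'er--Wold device.

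However, your route is genuinely different from the paper's. The paper works not with half-spaces but with closed \emph{hypercubes} having rational corners, sets $p_i=P(A_i)$, and obtains the reverse inequality $P'(A_i)\le P(A_i)$ by writing the complement $\mathbb{R}^n\setminus A_i$ as a countable disjoint union of other hypercubes from the same family and summing. Once $P=P'$ on all rational hypercubes, the paper finishes by a generation argument (the set $\{X:P(X)=P'(X)\}$ contains the hypercubes and is a Dynkin system, the hypercubes form a $\pi$-system generating $\mathfrak{B}_n$, hence $P=P'$). So the obstruction you anticipated---that the complement of a genuine intersection of half-spaces is not an increasing union of single quadrants---is handled in the paper not by avoiding such sets but by decomposing the complement into countably many disjoint cubes and summing the lower bounds over that decomposition.

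What each approach buys: the paper's argument is purely measure-theoretic and never leaves the Borel/$\pi$--$\lambda$ world, but it needs the (slightly delicate) decomposition of a cube's complement into disjoint cubes from the same countable family. Your approach sidesteps that combinatorics entirely by restricting to half-spaces, at the price of importing the analytic machinery of characteristic functions; it also makes very explicit why one-sided bounds suffice. Both are valid and roughly equally short.
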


\begin{proof}
Consider hypercubes whose corners are in $\mathbb{Q}^n$. These
hypercubes are quadrants, and the set of these hypercubes is
countable. Let $A_i$ be an enumeration of these hypercubes. Define
$p_i$ by $p_i=P(A_i)$. Then by definition of $p_i$, $P$ satisfies
all quadrant requirements $(A_i,p_i)$.

Conversely, assume $P'$ satisfies all quadrant requirements
$(A_i,p_i)$. Consider the set $\mathfrak{M}$ of all $X \in
\mathfrak{B}_{n}$ for which $P(X) = P'(X)$. Then $\mathfrak{M}$ is a
$\sigma$ algebra. Also, $A_i \in \mathfrak{M}$. Indeed, $P'(A_i)
\geq P(A_i)$ by definition of the quadrant requirements. On the
other hand, $\mathbb{R}^n \backslash A_i$  can be written as a union
of at most countably many disjoint hypercubes $A_j$ where $j\in
J(i)$, hence
\[
\begin{aligned}
1- P'(A_i) = P'(\mathbb{R}^n \backslash A_i) = \sum_{j\in J(i)} P'(A_j) \\
\geq \sum_{j\in J(i)} P(A_j) = P(\mathbb{R}^n \backslash A_i) = 1 -
P(A_i)
\end{aligned}
\]
This shows that $P'(A_i) \leq P(A_i)$, so we have $A_i \in
\mathfrak{M}$.

As $\mathfrak{B}_n$ is generated by $A_i$, we have
$\mathfrak{M}=\mathfrak{B}_n$. Thus, $P = P'$.
\end{proof}

%A little weaker is following
%
% \begin{remark}
% Let $P$ be any probability measure on $(\mathbb{R}^n,\mathfrak{B}_n)$ and $\epsilon > 0$. Then there are countably many pairs $(A_{i},p_{i})$ of quadrants $A_{i}$ and real numbers $p_{i} \in [0,1]$ such that
% \begin{itemize}
% \item $\mathbb{R}^n = \dot \cup_{i \in I} A_{i}$,
% \item $\sum_{i} p_{i} = 1$ and
% \item $\forall B \in \mathfrak{B}_n$: $\vert P(B) - \bar P'(B) \vert < \epsilon$
% \end{itemize}
% for a probability measure $\bar P'$, which is an extension on $(\mathbb{R}^n,\mathfrak{B}_n)$ of $P'(A_{i}):=p_{i}$.
% \end{remark}

 Now we can classify as follows:
 \begin{itemize}
\item \emph{No quadrant requirements at all} is equivalent to \emph{The distribution $P$ is exclusively the choice of each supervised entity} (no regulatory prescription at all, cf. Fallacy~\ref{Fallacy1}).
\item \emph{Countably many requirements on quadrants} can be defined so that \emph{the distribution $P$ is determined exclusively by the supervisor} (supervisor could well be regarded as ``overly prescriptive'').
\end{itemize}

\section{Scenarios and case study: use of scenarios in the SST}\label{casestudy}

The use of scenarios and stress testing is currently much discussed
within the regulatory community. We give a general definition of
scenarios and briefly describe their use under the Swiss Solvency
Test (SST).

\begin{definition}

A scenario $s$ is an element $d_s \in \mathbb{R}^n$, and an impact
of the scenario $s$ is the value $V(d_s)$; $d_s$ is also called
\emph{stressed situation}.

\end{definition}

$d_s$ can be thought of as a concrete realization of the underlying
risk factors. It is a powerful tool in the ``what-if-analysis'', to
answer the question:

\begin{quote}
"What happens to Company $1$ under a certain change of risk
factors?"
\end{quote}

In order to do so, the supervisor might specify $d_s$, the ``if'',
the stressed situation, and Company $1$ evaluates the impact of the
scenario $d_s$, gives the ``what'' by calculating $V_{1}(d_s)$. This
is ``qualitative'' in the sense that the impact of scenarios is
discussed between the entity and the supervisor. The impact of a
scenario might or might not increase the regulatory prescribed
capital required (PCR). Switzerland is a jurisdiction where scenarios impact the
PCR, and we briefly describe how this is done under
the current Swiss regime, the SST:

\begin{definition}

An \emph{enhanced scenario} $S$ is a couple $S=(d_S,p_S)$ consisting
of $d_S \in \mathbb{R}^n$ and $p_S \in [0,1]$, where $p_S$ is called
probability of occurence of $S$, $d_S$ is called \emph{deflection}
of $S$.

\end{definition}

\begin{definition}

A scenario set $M$ is an arbitrary finite set of enhanced scenarios
such that $p_M=\sum_{S\in M}{p_S}\leqq 1$.

\end{definition}

The scenario is enhanced in the sense that it comes with a
probability of occurence, which is set by Swiss supervisors. It is
needed to calculate the \emph{target capital}, as the PCR is
referred to in the SST, and usually more than one scenario is used.
In Switzerland, the target capital is determined via a distribution-based
approach.
The way of taking scenarios into account in the target capital
is to modify the initially assumed distribution in a certain way,
which is usually referred to as \emph{aggregation}.

In the following way Swiss supervisors currently aggregate the
impact of the scenarios. Strictly speaking, this method is only
prescribed within the SST standard model, but roughly speaking, it
currently constitutes as well a sort of industry standard in the
Swiss insurance sector, even for some entities using internal models
for regulatory purposes.

\begin{definition}[SST scenario aggregation]
Let $M$ be a scenario set, $P$ a probability measure on
$(\mathbb{R}^n,\mathfrak{B}_n)$ and $V$ a valuation function.
Consider the expression
\[(1-p_M) V_{*} P + \sum_{S\in M}{p_S ({\tau_{V(d_S)}}_{*} V_{*} P)},\]
where $\tau_{V(d)}$ denotes the translation in $\mathbb{R}$ in by
$V(d)$. This expression is called \emph{SST scenario aggregation}.
\end{definition}

Thus, the SST scenario aggregation is a \emph{mixture} of the
starting distribution $V_* P$ (describing the distribution of risk bearing capital \emph{before} aggregation of scenarios) with its translated versions, where
the translation is given by the impact of each scenario. However,
the fact that the valuation function $V$ gets involved during
scenario aggregation obscures the view on what is happening. As the
scenarios themselves are defined at the risk factor level, it seems
desirable to have a definition of what scenario aggregation means at
this level. This might be achieved by the following alternative approach:

\begin{definition}[aggregation by shifting]
Let $M$ be a scenario set, $P$ a probability measure on
$(\mathbb{R}^n,\mathfrak{B}_n)$. Define
\[P_M = (1-p_M) P + \sum_{S\in M}{p_S {\tau_{d_S}}_{*} P}\]
where $\tau_d$ denotes $v \mapsto v + d$, the \emph{translation} by
$d$ in $\mathbb{R}^{n}$. We say that $P_M$ is obtained from $P$ by
aggregating enhanced scenarios from scenario set $M$ to $P$.
\end{definition}

Thus, similar to the situation above, the resulting distribution of
the risk factors is a \emph{mixture} of the starting distribution
$P$ with its translated versions, where the translation is given by
the deflection of each scenario. So one might ask whether the SST
scenario aggregation at the level of the risk bearing capital leads
to the same result as the aggregation by shifting on the risk factor
level defined above. The following lemma deals with this question.

\begin{lemma}
The aggregation of enhanced scenarios acts on the distribution of
risk bearing capital in a similar way as on the distribution of risk
factors: the mixture of the distribution of the underlying risk
factors carries over to the distribution of risk bearing capital,
where the translation by the deflection of the scenarios is being
replaced by a translation by the impact of the scenarios, and the
valuation function $V$ is replaced by its twisted versions $V_d$.

The twisted valuation functions are defined by

\[
V_d(x):=V(x+d)-V(d)
\]

If $V$ is additive, then $V_d$ coincides with its twisted versions,
and in this case SST scenario aggregation leads to the same result
as the aggregation by shifting on the risk factor level.
\end{lemma}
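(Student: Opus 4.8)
The plan is to push the risk-factor mixture $P_M$ forward through $V$ and recognise the outcome as the claimed mixture. Since $\mu \mapsto V_* \mu$ is linear in the measure, I would first expand
\[
V_* P_M = (1-p_M)\, V_* P + \sum_{S\in M} p_S\, V_*\!\big((\tau_{d_S})_* P\big).
\]
The undeflected term $(1-p_M)\,V_* P$ already matches the corresponding term of the SST scenario aggregation, so all of the content sits in the summands.

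The central step is functoriality of the pushforward, $(f\circ g)_* = f_* \circ g_*$, which turns each summand into $V_*\!\big((\tau_{d_S})_* P\big) = (V\circ \tau_{d_S})_* P$. I would then unfold the composition using the very definition of the twisted function: for every $x \in \R^n$,
\[
V(\tau_{d_S}(x)) = V(x+d_S) = V_{d_S}(x) + V(d_S) = \tau_{V(d_S)}\!\big(V_{d_S}(x)\big),
\]
so that $V\circ \tau_{d_S} = \tau_{V(d_S)} \circ V_{d_S}$ as maps $\R^n \to \R$. Applying functoriality once more gives $(V\circ \tau_{d_S})_* P = (\tau_{V(d_S)})_* (V_{d_S})_* P$, and substituting back yields
\[
V_* P_M = (1-p_M)\, V_* P + \sum_{S\in M} p_S\, (\tau_{V(d_S)})_* (V_{d_S})_* P.
\]
This is exactly the descriptive assertion of the lemma: a mixture of the starting distribution $V_* P$ with copies translated by the impacts $V(d_S)$, in which the role of $V$ is now played by its twisted versions $V_{d_S}$.

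For the final claim I would use additivity of $V$, which gives $V(x+d) = V(x)+V(d)$ for all $x,d$ and hence $V_d(x) = V(x+d) - V(d) = V(x)$; that is, $V_d = V$ for every $d$. The twisted functions then coincide with $V$, so $(V_{d_S})_* P = V_* P$, and the identity above collapses precisely to the SST scenario aggregation $(1-p_M)\,V_* P + \sum_{S\in M} p_S\,(\tau_{V(d_S)})_* V_* P$.

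I expect no serious obstacle: the argument rests entirely on the linearity of the pushforward in the measure and on its functoriality under composition. The only point needing care is the algebraic decomposition $V\circ \tau_{d_S} = \tau_{V(d_S)} \circ V_{d_S}$, which is merely the defining relation of $V_{d_S}$ rearranged, together with keeping the order of the composed pushforwards straight. The real work is translating the informal first paragraph of the statement into the precise mixture identity; once that identity is established, the additive special case is immediate.
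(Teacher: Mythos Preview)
Your proof is correct and follows essentially the same approach as the paper: both expand $V_* P_M$ by linearity of the pushforward and then apply the key identity $V\circ\tau_d = \tau_{V(d)}\circ V_d$ via functoriality to reach the displayed mixture, after which the additive case is immediate. Your write-up is in fact slightly more explicit than the paper's in spelling out the two uses of functoriality and verifying the identity pointwise, but the argument is the same.
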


\begin{proof}

Note that
%\[
%\begin{aligned}
\begin{align*}
V_{*} P_M &= V_{*} ((1-p_M) P + \sum_{S\in M}{p_S {\tau_{d_S}}_{*} P}) \\
&= (1-p_M) V_{*} P + \sum_{S\in M}{p_S V_{*} ({\tau_{d_S}}_{*} P)} \\
&= (1-p_M) V_{*} P + \sum_{S\in M}{p_S ({\tau_{V(d_S)}}_{*}
{V_{d_S}}_{*} P)}\,.
%\end{aligned}
%\]
\end{align*}

In the last step, we have used the fact that

\[
V \circ \tau_d = \tau_{V(d)} \circ V_d\,.
\]

If $V$ is additive, then clearly $V_d=V$ and the expression obtained
above coincides with the definition of SST scenario aggregation.
%$\diamond$
\end{proof}

So, for additive $V$ the aggregation by shifting carries over to the SST scenario aggregation. However, in risk measurement $V$ is generally not additive.\\

One further challenge with aggregation by shifting is, that the
company specific distribution $P$ serves as input. Companies
evaluate the difference between their starting $P$ and the $P_M$
after scenario aggregation to understand the impact of the
scenarios. If a starting $P_1$ from Company $1$ in terms of PCR is
``harder'' than $P_2$ of Company $2$, this carries over to the
distributions ${P_1}_M$ and ${P_2}_M$, if both companies are asked
by the supervisor to aggregate the same scenario set $M$. So it could
be regarded to be fairer, to allow for different scenarios sets
$M_1$ and $M_2$, which reflects in some way the differences between
the starting distributions $P_1$ and $P_2$. In such an environment,
it is very difficult for the supervisor to evidence equal treatment.
Furthermore, with many companies in one jurisdiction, it is not very
practicable for the supervisors to prescribe a different scenario set
for all the companies.

%On the other hand, companies might have the possibility to discard a scenario, provided they are able to convince the supervisor it is already "appropriately" reflected in the starting distribution $P$. This "scenario selection" might not always be undisputed between supervisor and supervised company, and a generally understandable or accepted methodology has not yet emerged.

In the following sections, it is our program to develop criteria
which overcome these challenges using the quadrant language from
section \ref{Mathematicallyelegant}.

\section{Translating quadrant requirements into the scenario language}\label{Translatingquadrant}

In this section, our aim is to show that if a supervisor would like
to impose quadrant requirements, he can achieve the same aim by
requiring the aggregation of scenarios.

Unfortunately, if we use
aggregation by shifting, this statement only holds in a weaker form and under additional
assumptions on the nature of the quadrant requirements. The deeper reason behind this is that the
mechanics of aggregation by shifting strongly depends on the
initially assumed distribution $P$.

Therefore, we initially introduce a more convenient, alternative
aggregation method, which we call ``point mass aggregation''. This
helps us establishing our statement in a more general context.

Only then we return to the SST scenario aggregation and investigate,
which additional assumptions are necessary to establish our
statement when aggregation by shifting is applied by the supervisor.

Basically, this means that a supervisor relying on aggregation by
shifting acts self-restrictive in terms of the quadrant requirements
he is able to impose.

\begin{definition}[point mass aggregation]\label{ptmassdef}
Let $M$ be a scenario set, $P$ a probability measure on
$(\mathbb{R}^n,\mathfrak{B}_n)$. Define
\[P_M^{pt} = (1-p_M) P + \sum_{S\in M}{p_S \delta_{d_S}}\]
where $\delta_d$ denotes the Dirac measure centered at $d$. We say
that $P_M^{pt}$ is obtained from $P$ by aggregating enhanced
scenarios from scenario set $M$ as point-mass to $P$.
\end{definition}

The name of this aggregation method should be intuitive: It means,
that scenarios are aggregated by adding point masses at each
scenario deflection with the corresponding probability; on the other
hand, the probability weight of the initial distribution $P$ is
reduced accordingly, so that the resulting measure is a probability
measure.

\begin{theorem} \label{maintheorem1}
\label{pt_mass_agg_thm} Let $N=\{Q_1,Q_2,\ldots,Q_k\}$ be a set of
quadrant requirements. Then there exists a scenario set $M$ such
that for any distribution $P$, the distribution $P_{M}^{pt}$
resulting from aggregating the enhanced scenarios from $M$ as
point-mass to $P$ satisfies the quadrant requirements from $N$.
\end{theorem}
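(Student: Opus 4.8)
The plan is to realise each quadrant requirement directly by seating a single point mass inside the corresponding quadrant, and then to observe that point-mass aggregation can only increase the measure of any fixed set. Since the added masses are prescribed independently of $P$, the resulting scenario set will work uniformly for every starting distribution.

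Concretely, I would proceed as follows. First, recall that every quadrant is non-empty by Definition~\ref{quadrants}, so for each $i \in \{1,\dots,k\}$ I may choose a point $d_i \in A_{Q_i}$. I then set $S_i := (d_i, p_{Q_i})$ and let $M := \{S_1,\dots,S_k\}$. To check that $M$ is a legitimate scenario set I must verify $p_M \leqq 1$; but $N$ is by definition a set of quadrant requirements, so $\sum_{i=1}^{k} p_{Q_i} \leqq 1$, which is exactly $p_M \leqq 1$. Note that $M$ depends only on $N$ and not on $P$, as required.

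Next I would verify the requirement. Fix an arbitrary probability measure $P$ and an index $j$. Writing out the definition of point-mass aggregation gives
\[
P_M^{pt}(A_{Q_j}) = (1-p_M)\,P(A_{Q_j}) + \sum_{i=1}^{k} p_{Q_i}\,\delta_{d_i}(A_{Q_j}).
\]
Here $1 - p_M \geqq 0$ and $P(A_{Q_j}) \geqq 0$, so the first term is non-negative; each summand is non-negative; and the term $i=j$ equals $p_{Q_j}$ because $d_j \in A_{Q_j}$ forces $\delta_{d_j}(A_{Q_j}) = 1$. Dropping every term except this one yields $P_M^{pt}(A_{Q_j}) \geqq p_{Q_j}$, i.e.\ $P_M^{pt}$ fulfills $Q_j$. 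Since $j$ was arbitrary, $P_M^{pt}$ fulfills every requirement in $N$.

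I do not expect a genuine obstacle here: the strength of point-mass aggregation is precisely that the injected mass $\sum_i p_{Q_i}\,\delta_{d_i}$ is independent of $P$, so the bound above is termwise and needs nothing about $V$ or about how the quadrants overlap (a single $d_i$ lying in several quadrants only helps). The only points deserving a word of care are that quadrants are non-empty (built into the definition, so a point $d_i$ exists) and that the residual weight $1-p_M$ is non-negative (guaranteed by $p_M \leqq 1$). This contrasts sharply with aggregation by shifting, where the translated copies ${\tau_{d_S}}_* P$ depend on $P$ and a termwise argument of this kind is unavailable — which is exactly why the paper introduces point-mass aggregation first.
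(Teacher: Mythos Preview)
Your proof is correct and follows essentially the same approach as the paper: pick a point $d_j\in A_{Q_j}$, form the enhanced scenarios $(d_j,p_{Q_j})$, and bound $P_M^{pt}(A_{Q_j})$ from below by the single Dirac contribution at $d_j$. Your write-up is in fact slightly more careful than the paper's (you explicitly verify $p_M\leqq 1$ and expand the full sum before dropping terms), but the idea is identical.
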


\begin{proof}
Denote the quadrants and probabilities of the quadrant requirements
by $A_j$ and $p_j$, respectively. As $A_j\neq \emptyset$, we can
choose $d_j\in A_j$. $S_j=(d_j, p_j)$ defines an enhanced scenario,
and we define $M=\{S_1,S_2,\ldots,S_k\}$. Then for each $j$,
$P_M^{pt}$ has a point-mass of weight $p_j$ at $d_j$ by definition
of point-mass aggregation. As $A_j\supseteq \{d_j\}$,
$P_M^{pt}(A_j)\geq P_M^{pt}(\{d_j\})\geq p_j$. This shows that
$P_M^{pt}$ satisfies the quadrant requirements from $N$.
\end{proof}

Note that the theorem does not hold if we replace point mass
aggregation by aggregation by shifting, as can be seen from the
following

\begin{example} \label{Counterexample}
Let $p_{\max} \in [0,1]$ and $P$ such that $P(B) < p_{\max}$ for all
balls $B$ with fixed radius $R \geq 0$. For any quadrant $A$
contained in such a ball $B$ with radius $R$, $P$ never fulfills the
quadrant requirement $(A, p_{\max})$. Additionally, for any scenario
set $M$, $P_M$ does not satisfy the quadrant requirement
$(A,p_{\max})$. Indeed, aggregation by shifting is based on
translations, and the radius of any ball is invariant under
translations. Thus for any ball $B$ we have

\[
\begin{aligned}
P_M (B)& = (1-p_M) P(B) + \sum_{S\in M}{p_S {\tau_{d_S}}_* P(B)} \\
&=(1-p_M) P(B) + \sum_{S\in M}{p_S P(\tau_{-d_S} (B))} \\
&< (1-p_M) p_{\max} + \sum_{S\in M}{p_S p_{\max}} = p_{\max}
\end{aligned}
\]

Hence $P_M$ also has the property that $P(B) < p_{\max}$ for all
balls $B$ with fixed radius $R \geq 0$, so the assertion follows.

\end{example}

The situation above is easily achieved, as we show in the following

\begin{example}
Assume that $P$ is a distribution with density with respect to the
Lebesgue measure.

% such that $P(\vert \vert x \vert \vert > S)\neq 0$ for any $S\in [0,\infty)$.

Assume further that we have a set of quadrant requirements, such
that all quadrants are bounded and contained in a ball of radius $R
\geq 0$. Define $p_{\max} = \max_j(p_j)>0$, where $p_j$ are the
probabilities associated to the quadrant requirements, and denote
the quadrant associated to the maximum by $A$.

Let $\mathrm{Mult}_c$ denote multiplication by $c\in\mathbb{R}$
within $\mathbb{R}^n$. Then, we first observe that there exists
$c\in \mathbb{R}$ such that the image measure
$P'={\mathrm{Mult}_c}_* P$ has the property that $P'(B) < p_{\max}$
for \emph{any} ball $B$ of radius $R$.

Indeed, we can find a suitably countable partition of $\mathbb{R}^n
= \dot \cup_{i \in I} A_{i}$ such that
\begin{itemize}
\item $P(A_{i}) <\frac{p_{\max}}{2^n}$ $\forall i$
\item all $A_{i}$ are contained in balls of radius $\vert r \vert > 0$
\item for any ball $B$ with radius $\vert r \vert > 0$ there is $J \subseteq I$ with $\vert J \vert \leq 2^n$ and $B \subseteq \dot \cup_{j \in J} A_{j}$
\end{itemize}

We can then choose $c=R/r$, and it is easily verified that this $c$
has the required property from example \ref{Counterexample}.

\end{example}

Astonishingly, even for an easy distribution family as the normal,
it is not always possible to find a scenario set if we use the
aggregation by shifting.

We thus found a contradiction to following

\begin{fallacy}
One can always correct $P$ by aggregating scenarios by shifting in a
way that prescribed quadrant requirements are satisfied.
\end{fallacy}

However, under stronger assumptions on the quadrant requirements, we
are able to establish a weaker result also for aggregation by
shifting.

\begin{definition}\label{twosidedconstrained}
A convex set is said to be two-sided constrained, if it is contained
in a suitable set of the form $\lambda^{-1}([a,b])$, where $0 \neq
\lambda$ is a linear form on $\mathbb{R}^n$ and $a,b\in\mathbb{R}$.
\end{definition}

A quadrant being two-sided constrained basically means, that it is
defined by two counteracting conditions. An example could be a
quadrant defined by the 10-year-rate being in the range between
$0.5\%$ and $1\%$. In the regulatory context, especially in cases
where it is known which direction of move of a risk factor is
adverse, it is more common not to consider quadrants which are
two-sided constrained, as one does not wish to exclude ``more
extreme'' situations from the consideration; however, it may still
be useful to use quadrants which are two-sided constrained in some
cases. As an example, by buying derivatives a company might change
its risk profile such that the valuation function $V$ is no longer
monotonic in a certain risk factor, but known to have a minimum in
an interval $[a,b]$. In this situation, it might be useful to
consider events from a quadrant which is two-sided constrained.

We briefly discuss what definition \ref{twosidedconstrained} means
for the real numbers in

\begin{example}\label{linearformsonR}
%The linear forms $\lambda_{l}: \mathbb{R} \rightarrow \mathbb{R}$ are exactly the mappings $v \mapsto l * v$ for $l \in \mathbb{R}$. If $l=0$, we obtain the zero linear form, $l=1$ gives the identity, and $l=-1$ the reflection of the real line in $0$.
%For $l > 0$ we obtain a scaling map and in case $l < 0$ a reflecting scaling map.
%For every $a,b \in \mathbb{R}$ we can use the identity map to see that both $[a,b]$ and the symmetric interval $[- \vert a \vert, \vert a \vert]$ are two-sided constrained.
Consider a connected $\emptyset \neq S \subseteq \mathbb{R}$. If $ - \infty < \mathrm{inf}(S)$ and $ \mathrm{sup}(S) < \infty$, then
$S \subset [ \mathrm{inf}(S) - 1, \mathrm{sup}(S) + 1]$ is two-sided constrained. If $\mathrm{inf}(S) = - \infty$ or $ \mathrm{sup}(S) = \infty$, then $S$ contains intervals of arbitrary length. Thus, a convex $S \subseteq \mathbb{R}$ is either two-sided constrained or it contains line segments of arbitrary length.
\end{example}

We can now come to

\begin{theorem}
\label{SST_aggr_thm} Let $Q$ be a set of quadrant requirements
satisfying the following two additional conditions
\begin{enumerate}
\item \label{cond1} None of the quadrants of $Q$ is two-sided constrained
\item \label{cond2} The cumulative probability satisfies $p_Q<1$
\end{enumerate}

Then for any distribution $P$, there exists a scenario set $M$ such
that the distribution $P_M$ satisfies the quadrant requirement from
$Q$.

\end{theorem}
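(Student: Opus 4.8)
The plan is to handle the quadrant requirements one at a time: for each requirement $(A_j,p_j)$ in $Q$ I would introduce a single enhanced scenario whose deflection $d_j$ translates a high-probability ball of $P$ into $A_j$, and then mix these shifted copies of $P$ with slightly boosted weights. The driving geometric observation is that a quadrant which is \emph{not} two-sided constrained (condition~\ref{cond1}) must contain balls of arbitrarily large radius, so a suitable translate of $P$ can be made to put almost all of its mass inside the quadrant. Condition~\ref{cond2} ($p_Q<1$) then leaves exactly enough slack in the total weight to absorb the small amount of mass that inevitably leaks out of each ball.

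First I would establish the geometric lemma, which I expect to be the main obstacle. Write a quadrant as $A=\{x\in\mathbb{R}^n:\lambda_i(x)\ge c_i,\ i=1,\dots,m\}$ and let $C=\{v:\lambda_i(v)\ge 0,\ i=1,\dots,m\}$ be its recession cone. By Definition~\ref{twosidedconstrained}, $A$ is two-sided constrained precisely when some nonzero linear form $\mu$ has $\mu(A)$ bounded; for a polyhedron $\mu(A)$ is bounded above (resp.\ below) iff $\mu\le 0$ (resp.\ $\mu\ge 0$) on $C$, so $\mu(A)$ is bounded iff $\mu$ vanishes on $C$, i.e.\ iff $C$ fails to span $\mathbb{R}^n$. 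Hence $A$ \emph{not} two-sided constrained means $C$ spans $\mathbb{R}^n$. A convex cone spanning $\mathbb{R}^n$ contains $n$ linearly independent vectors $v_1,\dots,v_n$, hence the open set of strictly positive combinations $\sum t_i v_i$, and so has nonempty interior: there exist $v_0\in C$ and $\rho>0$ with $B(v_0,\rho)\subseteq C$. Fixing any $x_0\in A$, the recession property gives $x_0+t(v_0+u)\in A$ whenever $\|u\|\le\rho$ and $t>0$, so $B(x_0+tv_0,\,t\rho)\subseteq A$; letting $t\to\infty$ shows that $A$ contains balls of every radius.

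It remains to assemble the scenario set. Since $P$ is a Borel probability measure, $P(B(0,r))\to 1$ as $r\to\infty$, so I would fix $\epsilon$ with $0<\epsilon<1-p_Q$ and then a radius $r$ with $P(B(0,r))\ge 1-\epsilon$. For each quadrant $A_j$ of $Q$ the lemma furnishes a centre $z_j$ with $B(z_j,r)\subseteq A_j$; I set $d_j=z_j$, $p_{S_j}=p_j/(1-\epsilon)$, and $M=\{(d_j,p_{S_j})\}_j$. Translating $B(0,r)$ into $A_j$ gives $({\tau_{z_j}}_*P)(A_j)=P(A_j-z_j)\ge P(B(0,r))\ge 1-\epsilon$, whence, discarding the remaining nonnegative terms of the mixture,
\[
P_M(A_j)\ \ge\ p_{S_j}\,({\tau_{z_j}}_*P)(A_j)\ \ge\ \frac{p_j}{1-\epsilon}\,(1-\epsilon)\ =\ p_j .
\]

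Finally I would verify that $M$ is admissible: the total weight is $p_M=\sum_j p_{S_j}=p_Q/(1-\epsilon)$, which is $<1$ because $\epsilon<1-p_Q$, and each individual weight satisfies $p_{S_j}\le p_M<1$, so $M$ is a genuine scenario set. The only delicate step is the geometric lemma; once ``not two-sided constrained'' has been converted into ``contains arbitrarily large balls'', the probabilistic part reduces to tightness of $P$ combined with the slack supplied by condition~\ref{cond2}.
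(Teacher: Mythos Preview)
Your proof is correct and the probabilistic assembly (choice of $\epsilon$, tightness of $P$, boosted weights $p_j/(1-\epsilon)$, and the final lower bound on $P_M(A_j)$) is line-for-line the paper's argument.

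The one genuine difference is in how you handle the geometric input. The paper invokes its Lemma~\ref{convex sets}, stated for \emph{arbitrary} convex sets and proved in the appendix by induction on the dimension via the asymptotic cone; you instead specialise to polyhedra and argue directly with the recession cone $C$: not two-sided constrained $\Leftrightarrow$ no nonzero linear form vanishes on $C$ $\Leftrightarrow$ $C$ spans $\mathbb{R}^n$ $\Leftrightarrow$ $C$ has interior, and then a ball in $C$ is dilated along a recession direction to produce arbitrarily large balls in $A$. Your route is shorter and more elementary for the case actually needed (quadrants are polyhedra), but it leans on the standard polyhedral fact that a linear form is bounded on $A$ iff it is nonpositive/nonnegative on the recession cone---essentially a Minkowski--Weyl consequence that you might state explicitly. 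The paper's route buys a strictly more general dichotomy (valid for any convex $S$), at the cost of a longer proof.
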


\begin{remark}
Note that this result is much weaker than
Theorem~\ref{pt_mass_agg_thm}, as the scenario set chosen depends on
$P$.
\end{remark}

For the proof of Theorem~\ref{SST_aggr_thm} we will need some
properties of quadrants, which we formulate in a more general setting for convex sets as

\begin{lemma} \label{convex sets}

Let $S \subseteq \mathbb{R}^n$ be a convex set. Then exactly one of the
following two statements holds

\begin{enumerate}
\item $S$ is two-sided constrained.
\item $S$ contains $n$-dimensional balls of arbitrarily large radius as subsets.
\end{enumerate}
%Still need a proof or reference here: if not, leave away lemma and make it a condition.
\end{lemma}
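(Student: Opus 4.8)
The plan is to read ``exactly one'' as two separate claims: the two alternatives are mutually exclusive, and at least one of them always holds. Mutual exclusivity is the easy half. If $S$ is two-sided constrained, say $S\subseteq\lambda^{-1}([a,b])$, let $u$ be the unit vector with $\lambda(u)=\lVert\lambda\rVert$. Any inscribed ball $B(c,r)\subseteq S$ then contains the antipodal points $c\pm r u$, whence $2r\lVert\lambda\rVert=\lambda(c+ru)-\lambda(c-ru)\le b-a$, so $r\le (b-a)/(2\lVert\lambda\rVert)$. Hence a two-sided constrained set cannot contain balls of arbitrarily large radius, and the two statements never hold simultaneously. It remains to prove the substantive direction: if $S$ is \emph{not} two-sided constrained, then $S$ contains $n$-dimensional balls of arbitrarily large radius.

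For this I would argue by induction on $n$, using Example~\ref{linearformsonR} as the base case $n=1$ (there the ``balls'' are segments, and a convex $S\subseteq\mathbb{R}$ that is not two-sided constrained contains segments of arbitrary length). Before the inductive step I would make three reductions. First, replacing $S$ by its closure $\overline S$ changes neither property (a slab is closed, and a closed inscribed ball contains a concentric open ball lying in $\mathrm{int}(\overline S)=\mathrm{int}(S)\subseteq S$), so I may assume $S=K$ is closed and convex. Second, if $K$ has empty interior it lies in an affine hyperplane $\lambda^{-1}(c)\subseteq\lambda^{-1}([c,c])$ and is two-sided constrained; so in the case of interest $K$ has nonempty interior. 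Third, a bounded set sits inside a ball and hence inside a slab in every direction, so a non-two-sided-constrained $K$ is unbounded and its recession cone contains some $d\neq 0$.

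The inductive step projects along such a recession direction. Choosing coordinates with $d=e_n$, let $\pi:\mathbb{R}^n\to\mathbb{R}^{n-1}$ drop the last coordinate. If $\pi(K)$ were two-sided constrained, say $\pi(K)\subseteq\mu^{-1}([a,b])$, then $\lambda:=\mu\circ\pi$ is a nonzero form with $K\subseteq\lambda^{-1}([a,b])$, contradicting the hypothesis; hence $\pi(K)$ is not two-sided constrained in $\mathbb{R}^{n-1}$, and by induction it contains $(n-1)$-balls of arbitrarily large radius. To lift these to genuine $n$-balls I would use that $d=e_n$ is a recession direction: each nonempty fibre $\{t:(w,t)\in K\}$ is an upward half-line $[h(w),\infty)$, and the ``floor'' $h$ is a convex function on $\pi(K)$ because $K$ is convex. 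Given a target radius $R$, pick an $(n-1)$-ball $D=\overline B(w_0,R)$ lying inside $\mathrm{int}\,\pi(K)$ (take a ball of radius $2R$ and shrink); since $h$ is convex and finite it is bounded by some $H$ on the compact set $D$, so the half-cylinder $D\times[H,\infty)$ is contained in $K$ and the ball $\overline B\bigl((w_0,H+R),R\bigr)$ fits inside it. This produces balls of arbitrarily large radius, closing the induction.

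The step I expect to be the main obstacle is precisely this lifting, together with the conceptual point that the large balls are \emph{not} produced by the recession cone being fat: a set such as $\{(x,y):y\ge x^2\}$ is not two-sided constrained and contains arbitrarily large balls, yet its recession cone is only a ray, so no argument that merely inflates recession directions can succeed. The width that makes the balls fit must be recovered from the lower-dimensional projection, and the delicate point is that the fibre-floor $h$ is controlled (bounded on compacta) only on the \emph{interior} of $\pi(K)$, which is why the inscribed $(n-1)$-ball must be pushed strictly inside before being thickened in the $d$-direction. The degenerate possibilities along the way --- $h\equiv-\infty$ when $-d$ is also a recession direction (yielding a full cylinder, which is easier) and the bookkeeping of open versus closed balls in the closure reduction --- are routine but should be stated carefully.
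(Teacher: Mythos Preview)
Your proof is correct and shares the paper's overall architecture --- reduce to closed convex sets, induct on the dimension, invoke a recession direction, and project --- but the execution differs in two noteworthy ways. The paper chooses a \emph{maximal} linearly independent set $v_1,\ldots,v_k$ in the asymptotic cone $\mathrm{Cone}(S)$ and projects along their span in one shot to $\mathbb{R}^{n-k}$; it then appeals to a cited property of asymptotic cones of closed convex sets to lift large $(n-k)$-cubes in the image to one-sided cylinders in $S$. You instead peel off a \emph{single} recession direction $d$, project to $\mathbb{R}^{n-1}$, and replace the external citation by the elementary observation that the fibre-floor $h(w)=\inf\{t:(w,t)\in K\}$ is convex and hence bounded on compacta inside $\mathrm{int}\,\pi(K)$. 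Your route is slightly longer inductively (one dimension at a time) but more self-contained, and your explicit treatment of mutual exclusivity and of the degenerate cases (empty interior, $-d$ also recessive) is cleaner than the paper's somewhat terse appendix; the paper's route has the minor advantage that the case $k=n$ (full-dimensional recession cone) finishes immediately without any projection.
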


The proof can be found in the appendix.

\begin{remark}
We have not specified the norm with respect to which we consider the
balls in the lemma. However, if $S$ contains arbitrarily large balls
with respect to one norm, it will contain arbitrarily large balls
with respect to any norm, as all norms on $\mathbb{R}^n$ are
equivalent.
\end{remark}

The idea of the proof of Theorem~\ref{SST_aggr_thm} can now be
briefly outlined as follows: Given $P$, by its $\sigma$-additivity,
we can find a large ball of radius $R$ such that ``most'' of the
mass of $P$ lies inside this ball. By Condition \ref{cond1} and
Lemma~\ref{convex sets} we can shift $P$ by a ``deflection'' such
that this ball lies inside a given quadrant. By taking this
deflection to define a scenario, we cannot achieve that ``all'' the
mass lies inside the corresponding quadrant, but most of it. By then
increasing the probabilities associated to the quadrants only
slightly such that the sum remains below $1$ - which can be achieved
due to Condition \ref{cond2} of the theorem - we can ensure that
sufficient mass lies inside each quadrant, and therefore the
quadrant requirement will be satisfied.

\begin{proof}
Denote the quadrant requirements from $Q$ by $(A_i, p_i)$, and let
$N$ be the number of quadrants. As $p_Q<1$, we can choose $\epsilon
> 0$ such that $\frac{p_Q}{1-\epsilon} < 1$. Because of $\sigma$
additivity and $\mathbb{R}^n = \cup_{R\in \mathbb{N}} B_R$, we have
$lim_{R \rightarrow \infty} P(B_R)=1$, where $B_R$ denotes a ball of
radius $R$ centered at $0$. Hence, we can choose $R$ such that
$P(\mathbb{R}^n \backslash B_R)< \epsilon$.

By Condition \ref{cond1} and Lemma~\ref{convex sets} we can find a
deflection $d_i$ for $i=1\ldots N$, such that the ball of radius $R$
centered at $d_i$ lies inside $A_i$.

Take $M$ to be the set of enhanced scenarios $(d_i, p'_i)$, where
$p'_i = \frac{p_i}{1-\epsilon}$. We claim that $P_M$ satisfies the
quadrant requirements from $Q$, as desired. Indeed, for any $A_i$ we
have

\[
\begin{aligned}
P_M(A_i) = (1-p_M) P(A_i) + \sum_{S\in M}{p'_S {\tau_{d_S}}_* P(A_i)} \\
=(1-p_M) P(A_i) + \sum_{S\in M}{p'_S P(\tau_{-d_S} (A_i))} \\
\geq p'_{i} P(\tau_{-d_{S_i}} (A_i)) \geq p'_{i} P(B_R)
> p'_{i}(1-\epsilon) = p_i
\end{aligned}
\]
\end{proof}

Fortunately, the concept of point mass aggregation has the appealing
property that the converse of Theorem~\ref{maintheorem1} holds as
well:

\begin{theorem}\label{thmscenarioeq}
Let $M$ be a scenario set. Then there exists a set $N$ of quadrant
requirements, such that any $P$ fulfilling the quadrant requirements
from $N$ can be written in the form $P={P'_M}^{pt}$ with a suitable
probability measure $P'$.
\end{theorem}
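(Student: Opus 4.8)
The plan is to construct the quadrant requirement set $N$ from the given scenario set $M=\{S_1,\ldots,S_k\}$ with $S_j=(d_j,p_j)$, so that the point masses encoded by $M$ are exactly recoverable from the constraints in $N$. The key idea is that point-mass aggregation places an \emph{explicit} atom of weight $p_j$ at each deflection $d_j$, so to force any $P$ satisfying $N$ into the form ${P'_M}^{pt}$ I must design quadrant requirements that (i) pin down a point mass of at least $p_j$ at each $d_j$, and (ii) control the total mass near the $d_j$ so that the remaining mass $(1-p_M)$ is free to be an arbitrary probability measure $P'$.

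\medskip

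First I would, for each scenario $S_j$, take the singleton quadrant $\{d_j\}$ (which is a quadrant by the Example following Definition~\ref{quadrants}) and impose the requirement $(\{d_j\}, p_j)$. By the argument in Theorem~\ref{pt_mass_agg_thm}, any $P$ satisfying these requirements has $P(\{d_j\})\geq p_j$. The subtlety is that I need \emph{equality}, or at least enough control to peel off exactly $p_j$ at each $d_j$ and exhibit the leftover as $(1-p_M)P'$. To force the weights not to exceed what is needed, I would add a \emph{complementary} quadrant requirement bounding the mass away from the atoms: the complement of a small region around the $\{d_j\}$ must carry mass at least $1-p_M-(\text{slack})$, phrased as a quadrant requirement on a quadrant disjoint from the $d_j$. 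Since $\mathbb{R}^n\setminus\{d_1,\ldots,d_k\}$ is not itself a quadrant, I would instead exploit the technique already used in the proof of the first Theorem, writing the relevant region as a countable union of disjoint hypercubes and summing requirements, or equivalently choosing finitely many quadrants whose union has the right mass.

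\medskip

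Then, assuming $P$ satisfies all requirements in $N$, I would define $P'$ by removing the atoms: set $a_j:=P(\{d_j\})\geq p_j$ and write $P = \sum_j a_j\,\delta_{d_j} + P_{\mathrm{cont}}$ where $P_{\mathrm{cont}}$ is the part of $P$ carrying no mass at the $d_j$. The complementary requirements force $\sum_j a_j \leq p_M$, hence combined with $a_j\geq p_j$ and $\sum_j p_j = p_M$ I obtain $a_j=p_j$ for every $j$. Thus $P = \sum_j p_j\,\delta_{d_j} + P_{\mathrm{cont}}$ with $P_{\mathrm{cont}}(\mathbb{R}^n)=1-p_M$, and I set $P' = \tfrac{1}{1-p_M}P_{\mathrm{cont}}$ (treating the degenerate case $p_M=1$ separately, where $P'$ may be taken arbitrary). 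Unwinding Definition~\ref{ptmassdef} gives $P = (1-p_M)P' + \sum_j p_j\,\delta_{d_j} = {P'_M}^{pt}$, as required.

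\medskip

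The main obstacle I anticipate is the upper bound on the atomic masses: the quadrant-requirement formalism is inherently one-sided ($P(A_Q)\geq p_Q$), so forcing $P(\{d_j\})\leq p_j$ cannot be done directly and must be achieved \emph{indirectly} by lower-bounding the mass elsewhere. Getting the bookkeeping right — ensuring the complementary requirements leave exactly $1-p_M$ available and do not accidentally constrain the free part $P'$ — is where the care lies, and the countable-hypercube decomposition borrowed from the first Theorem's proof will be the natural tool to make this rigorous.
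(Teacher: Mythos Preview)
Your first move --- taking the singleton quadrants $\{d_j\}$ with weights $p_j$ --- is exactly what the paper does, and it is already the \emph{entire} construction of $N$. The complication you introduce afterwards is unnecessary and rests on a misconception.

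You write that you ``need equality'' $P(\{d_j\})=p_j$, or at least an upper bound on the atomic masses. You do not. Suppose only $P(\{d_j\})\geq p_j$ for each $j$ and $p_M<1$. Define directly
\[
P' \;=\; \frac{1}{1-p_M}\Bigl(P-\sum_{S\in M} p_S\,\delta_{d_S}\Bigr).
\]
Positivity of $P'$ on any Borel set $A$ follows from $P(A)\geq \sum_{j:\,d_j\in A} P(\{d_j\})\geq \sum_{j:\,d_j\in A} p_j$, and the normalising factor gives $P'(\mathbb{R}^n)=1$. Then ${P'_M}^{pt}=P$ by Definition~\ref{ptmassdef}. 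If $P(\{d_j\})>p_j$ the excess simply appears as an atom of $P'$ at $d_j$; nothing in the statement forbids $P'$ from having mass at the deflection points. Your decomposition $P=\sum_j a_j\delta_{d_j}+P_{\mathrm{cont}}$ with $P_{\mathrm{cont}}(\{d_j\})=0$ forces you to chase $a_j=p_j$, whereas the paper's $P'$ is allowed to carry the surplus.

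Because of this, your ``complementary'' quadrant requirements are not needed --- which is fortunate, since they cannot be implemented in the paper's formalism: a set of quadrant requirements is by definition \emph{finite}, quadrants are closed convex sets, and no finite collection of such sets can cover $\mathbb{R}^n\setminus\{d_1,\ldots,d_k\}$, so you cannot force a lower bound on the mass of that complement via quadrant requirements. The countable-hypercube trick from Theorem~1 does not transfer here for the same reason.

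In short: keep your first paragraph, drop the rest, and define $P'$ as the signed difference above rather than as the atom-free part.
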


\begin{proof}
Denote the enhanced scenarios from $M$ by
$(d_1,p_1),(d_2,p_2),\ldots,(d_m,p_m)$. Define quadrants by $Q_j =
\{d_j\}$. Let $N$ be the set of quadrant requirements defined by
$Q_j$ and the associated probabilities $p_j$. We claim that $N$ has
the desired property. Indeed, if $P$ is a probability measure
fulfilling the quadrant requirements from $N$, in case $p_{M} < 1$
we can define
\[P' =\frac{1}{1-p_M} \left(P-\sum_{S\in M}{p_S \delta_{d_S}}\right)\,.\]
It is easily verified that $P'$ is a probability measure: The
positivity follows from the fact that $P$ fulfills the quadrant
requirements, and $P'(\mathbb{R}^n)=1$ due to the normalizing factor
$\frac{1}{1-p_M}$ in the definition of $P'$. Using
Definition~\ref{ptmassdef}, one can verify by a short calculation
that ${P'_M}^{pt}=P$.

If $p_{M} = 1$ then $P={P'_M}^{pt}$ for any probability measure
$P'$.
\end{proof}

One challenge for supervisors is to decide whether or not a scenario
can be excluded. An answer is given by

\begin{remark}
Theorem~\ref{pt_mass_agg_thm} together with
Theorem~\ref{thmscenarioeq} is very helpful if one is looking for
criteria to exclude scenarios for regulatory purposes while using
point mass aggregation: let $M$ be a scenario set. A subset $M'$ of
$M$ is called \emph{sufficient} for $P$ if ${P_{M'}}^{pt}$ satisfies
the quadrant requirements.
\end{remark}

As discussed in section \ref{Mathematicallyelegant}, the test
whether or not quadrant requirements are fulfilled, is free of
subjectivity and reproducible. Because of the equivalence stated in
Theorem~\ref{pt_mass_agg_thm} and \ref{thmscenarioeq}, these
properties carry over to scenarios when using point mass
aggregation.

\section{Further properties of scenario aggregation}

We consider the mapping on the set of probability
distributions defined by $P \mapsto {P_M}^{pt}$. One might ask whether,
knowing ${P_M}^{pt}$, we can reconstruct $P$.  The answer is provided in

\begin{theorem}
The mapping defined by $P\mapsto {P_M}^{pt}$ is injective for $p_{M}
< 1$. The image consists of all probability distributions satisfying
the quadrant requirement associated to $M$ by means of
Theorem~\ref{thmscenarioeq}.
\end{theorem}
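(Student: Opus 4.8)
The plan is to separate the statement into the injectivity claim and the image characterization, and to obtain the latter by combining Theorems~\ref{pt_mass_agg_thm} and~\ref{thmscenarioeq} rather than arguing from scratch.

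First I would dispose of injectivity. Assuming $p_M<1$ and ${P_M}^{pt}={Q_M}^{pt}$ for probability measures $P,Q$, I observe from Definition~\ref{ptmassdef} that the point-mass contributions $\sum_{S\in M}p_S\delta_{d_S}$ agree on the two sides; subtracting them yields the signed-measure identity $(1-p_M)P=(1-p_M)Q$, and dividing by the nonzero scalar $1-p_M$ gives $P=Q$. This is the only step where $p_M<1$ is genuinely needed, since for $p_M=1$ the map collapses to the constant $\sum_{S\in M}p_S\delta_{d_S}$ and injectivity fails.

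Next I would address the image. Writing the enhanced scenarios of $M$ as $(d_j,p_j)$, the quadrant requirement associated to $M$ in the proof of Theorem~\ref{thmscenarioeq} is the set $N$ of singleton requirements $(\{d_j\},p_j)$, and I would prove the two inclusions separately. To show the image is contained in the solution set of $N$, I would compute directly, for each $j$,
\[
{P_M}^{pt}(\{d_j\})=(1-p_M)P(\{d_j\})+\sum_{S\in M}p_S\,\delta_{d_S}(\{d_j\})\;\geq\;p_j,
\]
since the summand indexed by $S_j$ already contributes $p_j$ and every remaining term is nonnegative, so ${P_M}^{pt}$ fulfills all requirements in $N$. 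For the reverse inclusion I would simply quote Theorem~\ref{thmscenarioeq}: any $P$ fulfilling $N$ is of the form ${P'_M}^{pt}$ with $P'=\frac{1}{1-p_M}\bigl(P-\sum_{S\in M}p_S\delta_{d_S}\bigr)$, and is therefore in the image.

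The one point requiring care is the bookkeeping in the definition of $N$ when several scenarios share a deflection $d_j$: the associated singleton then carries the summed probability, and the displayed inequality must count every point mass sitting at $d_j$. I do not expect a substantive obstacle beyond this, as the result is essentially a restatement of the two earlier theorems; once the two inclusions are combined to identify the image with the solution set of $N$, the theorem follows together with the injectivity argument.
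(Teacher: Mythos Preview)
Your proof is correct and matches the paper's approach: the injectivity argument via subtracting the common point-mass part and dividing by $1-p_M$ is exactly the paper's explicit inverse formula $P=\frac{1}{1-p_M}\bigl({P_M}^{pt}-\sum_{S\in M}p_S\delta_{d_S}\bigr)$, and the image characterization is deduced from Theorem~\ref{thmscenarioeq} in both. You are in fact slightly more careful than the paper, which simply cites Theorem~\ref{thmscenarioeq} for the image description, whereas you spell out the forward inclusion ${P_M}^{pt}\in N$ directly and flag the bookkeeping issue with repeated deflections.
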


\begin{proof}
The characterization of the image is a direct consequence of
Theorem~\ref{thmscenarioeq}. The injectivity can be seen by
observing that
\[
P =\frac{1}{1-p_M} \left({P_M}^{pt}-\sum_{S\in M}{p_S
\delta_{d_S}}\right),
\]
thus we have an explicit formula for $P$ given ${P_M}^{pt}$.
\end{proof}

\begin{remark}
Whilst the properties of the mapping induced by point mass
aggregation are easily studied, the properties of the corresponding
mapping induced by aggregation by shifting do not appear to so
clear.
\end{remark}

Next, we consider the properties of successive aggregation of
scenarios. Here we can see that even point-mass aggregation is not
so easily tractable.
Consider disjoint scenario sets $M_1, M_2$ respectively such that
$M_1\cup M_2$ is again a scenario set (i.e.\ total probability of
all scenarios $\leq 1$). Then one might ask whether
\[
P_{M_1\cup M_2}, (P_{M_1})_{M_2}, (P_{M_2})_{M_1}
\]
all lead to the same result; the same question may be asked for
point mass aggregation. The answer is that, even in the case of
point mass aggregation, the three terms may generally lead to different
results. This is even the case for two different enhanced scenarios, as can be seen from

\begin{example}
Take $P$ as $\delta_0$, the Dirac measure centered at $0$, and take $M_1, M_2$
as sets consisting of one enhanced scenario each, defined by $(d_1,p_1)$ and
$(d_2,p_2)$.

Then we have, for point mass aggregation
\[
\begin{aligned}
P_{M_1\cup M_2}^{pt} =(1-p_1-p_2)\delta_0+p_1\delta_{d_1}+p_2\delta_{d_2} \\
(P_{M_1}^{pt})_{M_2} = (1-p_1)(1-p_2)\delta_0+p_1(1-p_2)\delta_{d_1}+p_2\delta_{d_2} \\
(P_{M_2}^{pt})_{M_1} = (1-p_1)(1-p_2)\delta_0+p_1\delta_{d_1}+(1-p_1)p_2\delta_{d_2}
\end{aligned}
\]

All three terms generally lead to different results, so it is important when talking about
scenario aggregation to specify whether finitely many scenarios are to be aggregated
in one step or successively. Furthermore, if scenarios are aggregated successively, the
order needs to be specified. To avoid this, we are using the aggregation in one step
as a standard in this paper when aggregating a scenario set, which has the advantage
that we do not need to specify an order for our scenario sets.

Similar formulas are obtained when using aggregation by shifting.
\end{example}

So a supervisor should be very clear in what he is requiring, because
aggregation in one step may well lead to another result than
successive aggregation.

\section{Simulation based approaches and the quadrant definition}

During the proof of Theorem~\ref{thmscenarioeq}, we have made use of
quadrants consisting of one point only. Whilst theoretically useful,
such quadrant requirements lead to difficulties in the context of a
simulation based approach. The reason for this is that a simulation
generated based on continuous probability distributions will fulfill
such a quadrant requirement with probability $0$.

For this purpose, we introduce a somewhat restricted definition of
quadrants:

\begin{definition}
A quadrant $Q$ is called \emph{non-degenerate}, if it has non-zero
Lebesgue measure.
\end{definition}

Therefore, when practically working with quadrant requirements,
supervisors should use non-degenerate quadrants if they want to
enable companies to apply the usual simulation techniques.

\section{Scenario aggregation: a more general setting}
So far, we discussed two methods, aggregation by shifting and by
point mass. In this section we want to outline how these
results can be embedded in a more general framework. It turns out, that for
theoretical and practical reasons point mass aggregation excels by unique properties
within the class of aggregation methods based on mixing. We start with

\begin{definition}[$\phi$-aggregation]
Let $M$ be a scenario set, $P$ a probability measure on
$(\mathbb{R}^n,\mathfrak{B}_n)$, and $(\phi_{S})_{S}$ an arbitrary
family of measurable mappings $\mathbb{R}^{n} \rightarrow
\mathbb{R}^{n}$. Define
\[P^{\phi}_M = (1-p_M) P + \sum_{S\in M}{p_S {\phi_{S}}_{*} P}\]
We say that $P^{\phi}_M$ is obtained from $P$ by
\emph{$\phi$-aggregating} enhanced scenarios from scenario set $M$
to $P$.
\end{definition}

The two methods discussed above are easily included, as we can see
in

\begin{example}
Point mass aggregation is a $\phi$-aggregation for the family of
constant mappings $\phi_{S}: v \mapsto d_{S}$, and aggregation by
shifting is clearly induced by the family $\phi_{S}:=\tau_{d_S}$.
\end{example}

Example \ref{Counterexample} makes only use of \emph{one} property
of translations, their feature being \emph{distance preserving}
mappings. In order to generalize this counterexample, we need

\begin{definition}
A mapping $\phi: \mathbb{R}^{n} \rightarrow \mathbb{R}^{n}$ is
called \emph{expanding}, if $\vert x - y \vert \leq \vert \phi(x) -
\phi(y) \vert$ $\forall x, y \in \mathbb{R}^{n}.$
\end{definition}

To proceed, we need

\begin{lemma}\label{expanding}
An expanding mapping $\mathbb{R}^{n} \rightarrow \mathbb{R}^{n}$ is
always injective, and bijective if and only if it is continuous.
\end{lemma}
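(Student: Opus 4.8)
The plan is to separate the two assertions. Injectivity is immediate and requires no continuity: if $\phi(x)=\phi(y)$ then the expanding inequality gives $|x-y|\le|\phi(x)-\phi(y)|=0$, forcing $x=y$. The substance of the lemma is therefore the equivalence ``bijective $\iff$ continuous'', and both implications will hinge on a single topological input, Brouwer's invariance of domain, which states that a continuous injection $\R^n\to\R^n$ is an open map (hence a homeomorphism onto its image).

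For the direction ``continuous $\Rightarrow$ bijective'', since injectivity is already known I only need surjectivity. First I would record that $\phi$ is proper: from $|\phi(x)-\phi(0)|\ge|x|$ one gets $|\phi(x)|\ge|x|-|\phi(0)|\to\infty$, so $\phi$ pushes points to infinity. This makes the image closed: if $\phi(x_k)\to z$ then $\{\phi(x_k)\}$ is bounded, the expanding bound forces $\{x_k\}$ to be bounded, a subsequence converges to some $x$, and continuity gives $\phi(x)=z$. On the other hand, invariance of domain makes $\phi(\R^n)$ open. A nonempty subset of the connected space $\R^n$ that is both open and closed must be all of $\R^n$, so $\phi$ is onto, hence bijective.

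For the converse ``bijective $\Rightarrow$ continuous'', the key observation is that the expanding inequality, read in terms of the inverse, says exactly that $\phi^{-1}$ is $1$-Lipschitz: setting $u=\phi(x)$ and $v=\phi(y)$ gives $|\phi^{-1}(u)-\phi^{-1}(v)|\le|u-v|$. In particular $\phi^{-1}$ is a continuous bijection $\R^n\to\R^n$, so by invariance of domain it is a homeomorphism, and therefore its inverse $\phi$ is continuous.

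The main obstacle is not the elementary bookkeeping (injectivity, the properness estimate, the Lipschitz reformulation of the inverse) but the reliance on invariance of domain, which supplies both the openness of the image in the first implication and the automatic continuity of the set-theoretic inverse in the second. I do not see a way to obtain surjectivity by a purely metric argument—minimizing $|\phi(x)-z|$ only yields a closest point, not a preimage—so the deep topological theorem appears genuinely necessary here.
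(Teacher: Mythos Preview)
Your proof is correct and follows essentially the same route as the paper: injectivity is immediate, surjectivity under continuity is obtained by showing the image is open (via invariance of domain, which the paper calls the ``open mapping theorem'') and closed (via a sequential compactness/Cauchy argument) in the connected space $\mathbb{R}^n$, and the converse is handled by noting that $\phi^{-1}$ is $1$-Lipschitz and applying invariance of domain once more to conclude it is a homeomorphism. The only cosmetic difference is that the paper shows closedness by observing that the preimage sequence is Cauchy rather than by extracting a convergent subsequence from a bounded sequence.
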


\begin{proof}
We first show injectivity. Let $x,y \in \mathbb{R}^n$, and assume
$\phi(x)=\phi(y)$. Then by definition of an expanding map, we have

\[
0 = \vert \phi(x) - \phi(y) \vert \geq \vert x-y \vert
\]

Thus we have $\vert x-y \vert =0$ and $x=y$.

Now we assume $\phi$ to be continuous, so we have to show
surjectivity. As $\mathbb{R}^n$ is connected, it suffices to show
that the image $\phi(\mathbb{R}^n)$ is both open and closed. First,
we note that the image is open due to the open mapping theorem. So
it remains to be shown that it is closed. For this, it is sufficient
to show that with any convergent sequence $(y_j)$, $y_j \in
\phi(\mathbb{R}^n)$, the limit $y$ is also contained in
$\phi(\mathbb{R}^n)$.

Choose $x_j \in \mathbb{R}^n$ such that $\phi(x_j)=y_j$. By the
definition of an expanding map, we have for any $j, k \in
\mathbb{N}$

\[
\vert x_j - x_k \vert \leq \vert \phi(x_j) - \phi(x_k) \vert = \vert
y_j - y_k \vert
\]

From this inequality and the fact that $(y_j)$ is a Cauchy sequency,
we conclude that $(x_j)$ is also a Cauchy sequence. Denote the limit
by $x$. Then by continuity, we conclude that $y=\phi(x)$, therefore
$y \in \phi(\mathbb{R}^n)$, and we have shown the surjectivity of
$\phi$.

Conversely, assume surjectivity. Then $\phi$ is bijective, so the
inverse mapping $\psi$ can be defined. It is obvious that the
definition of an expanding map implies that $\psi$ is a contraction.
Especially, $\psi$ is Lipschitz-continuous and therefore continuous.
Due to the open mapping theorem, $\psi$ maps open sets to open sets,
which means that the inverse image of an open set $U$under $\phi$ is
open, because $\phi^{-1}(U)= \psi (U)$. By definition, this means
that $\phi$ is continuous.
\end{proof}

There is a vast literature on expanding maps and on their
generalizations on Hilbert spaces other then $\mathbb{R}^{n}$, we
refer to (\cite{GR81}, \cite{SZ01}). Especially, the generalization
of Lemma~\ref{expanding} is only valid under additional assumptions
in the case of Hilbert or Banach spaces. The reason why the proof
given above cannot be carried over is that the open mapping theorem
does not hold in arbitrary Hilbert spaces.

With arguments similar as in example \ref{Counterexample} we can now
conclude

\begin{lemma}\label{Niceresult}
Let $M$ be a scenario set, and $\phi_S$ a family of measurable
expanding mappings, $p_{\max} \in [0,1]$ and $P$ such that $P(B) <
p_{\max}$ for all balls $B$ with fixed radius $R \geq 0$. Then
$P_M^{\phi}$ does not fulfill the quadrant requirement $(A,
p_{\max})$ for any quadrant $A$ contained in a ball $B$ with radius
$R / 2$. Furthermore, if $\phi$ is continuous, then $P$ never
fulfills the quadrant requirement $(A, p_{\max})$ for any quadrant
$A$ contained in a ball $B$ with radius $R$.
\end{lemma}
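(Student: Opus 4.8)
The plan is to mimic Example~\ref{Counterexample}, replacing the single ``distance preserving'' feature of translations by the expanding hypothesis on the $\phi_S$. Everything reduces to the identity
\[
P_M^{\phi}(A) = (1-p_M)\,P(A) + \sum_{S\in M} p_S\, P\bigl(\phi_S^{-1}(A)\bigr),
\]
which follows from the definition of $\phi$-aggregation together with the defining property of the image measure, ${\phi_S}_*P(A)=P(\phi_S^{-1}(A))$. Since the weights $(1-p_M)$ and the $p_S$ are non-negative and sum to $1$, it suffices to bound each of the numbers $P(A)$ and $P(\phi_S^{-1}(A))$ strictly below $p_{\max}$: a convex combination of quantities all $<p_{\max}$, with total weight $1$, is again $<p_{\max}$, because at least one weight is positive and for that index the slack relative to $p_{\max}$ is strictly positive. (If $p_{\max}=0$ the hypothesis $P(B)<p_{\max}$ cannot hold, so there is nothing to prove, and I may assume $p_{\max}>0$; in particular an empty preimage contributes $0<p_{\max}$.)

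The geometric heart is the following. If $A$ lies in a ball of radius $R/2$, its diameter is at most $R$. For any $x,y\in\phi_S^{-1}(A)$ the expanding property gives $|x-y|\le|\phi_S(x)-\phi_S(y)|\le\operatorname{diam}(A)\le R$, so $\phi_S^{-1}(A)$ has diameter at most $R$ and is therefore contained in a closed ball of radius $R$ centered at any one of its points. The hypothesis on $P$ then yields $P(\phi_S^{-1}(A))<p_{\max}$, and likewise $P(A)<p_{\max}$ since $A$ is contained in a ball of radius $R/2$, hence in a ball of radius $R$. Substituting these bounds into the displayed identity proves the first assertion.

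For the refinement under continuity I would invoke Lemma~\ref{expanding}: a continuous expanding map is bijective with a contraction inverse $\psi_S$. Then $\phi_S^{-1}(A)=\psi_S(A)$, and a contraction carries a ball of radius $R$ into a ball of radius $R$, for if $A\subseteq B(c,R)$ then $|\psi_S(y)-\psi_S(c)|\le|y-c|\le R$ shows $\psi_S(A)\subseteq B(\psi_S(c),R)$. Hence each image measure ${\phi_S}_*P$ still satisfies the ``$<p_{\max}$ on balls of radius $R$'' property, the admissible radius relaxes from $R/2$ back to the full $R$, and $P_M^{\phi}$ fails the requirement $(A,p_{\max})$ for every quadrant inside a ball of radius $R$. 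The accompanying claim that $P$ itself never fulfills $(A,p_{\max})$ for $A$ inside such a ball is immediate from the hypothesis, since $P(A)\le P(B)<p_{\max}$.

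The only genuinely delicate point is the factor-of-two bookkeeping between ``radius'' and ``diameter'': the expanding property only controls the \emph{diameter} of a preimage, so to be sure $\phi_S^{-1}(A)$ lands in a ball of radius $R$ one must start from a quadrant of diameter at most $R$, i.e.\ one inside a ball of radius $R/2$. Continuity is precisely what removes this loss, since a contraction maps a radius-$R$ ball (not merely a diameter-$R$ set) back into a radius-$R$ ball. I expect the remaining verifications to be routine.
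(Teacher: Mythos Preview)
Your proposal is correct and follows precisely the approach the paper indicates: the paper gives no detailed proof of Lemma~\ref{Niceresult}, merely stating that it follows ``with arguments similar as in example~\ref{Counterexample}'', and your argument is exactly that adaptation---the convex-combination identity for $P_M^{\phi}$ together with the diameter bound on $\phi_S^{-1}(A)$ coming from the expanding property, and Lemma~\ref{expanding} for the continuous case. Your careful handling of the radius-versus-diameter factor of two, the empty-preimage edge case, and the literal reading of the second sentence (about $P$ rather than $P_M^{\phi}$) in fact supplies more detail than the paper does.
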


Thus, we should exclude the expanding mappings for the purpose of
scenario aggregation if we want to be sure that quadrant
requirements can be fulfilled. Note that any distance preserving
mapping is expanding by definition, so that translations and other
isometric mappings are naturally covered by Lemma~\ref{Niceresult}.

Point mass aggregation is induced by constant mappings, which are a
special case of contracting mappings $\phi$, where
\emph{contracting} is given by the dual notion of expanding, i.e.
$\vert x - y \vert \geq \vert \phi(x) - \phi(y) \vert$ $\forall x, y
\in \mathbb{R}^{n}$.

\begin{remark}
One might be tempted to try other contracting mapping $\phi_{S}$
than just the constant ones in order to satisfy quadrant
requirements by $\phi$-aggregation. In this case, however, we would
recall the arguments at the end of section \ref{casestudy}.
%and those at the beginning of section \ref{Translatingquadrant}.
Indeed, it turns out that point mass aggregation is the only one
induced by a contracting family without these caveats.
\end{remark}

So far, we only used mixing for aggregation of scenarios. One might
well ask the question, whether or not there are other algorithms of
aggregating them. Suppose there is an arbitrary input distribution
$P$, and an algorithm unknown to the supervisor (``black box'') which
processes $P$ in finite time into a distribution $P'$ which
satisfies an arbitrary finite set of quadrant requirements. This
``black box'' is equivalent to the supervisor's algorithm of point
mass aggregation which yields $P^{pt}$ in finite time: $P^{pt}$
might well be not equal to $P'$, but equivalent in the sense that
both $P'$ and $P^{pt}$ satisfy the given quadrant requirements. Thus
there is always an mixing algorithm which yields equivalent results
as the ``black box'' but is much more transparent then it.
%If we additionally allow countably many requirements on quadrants, then by lemma ? both algorithms even yield the identical result of $P'' = P'$.

All together, one is well advised not seeking for too long time for an alternative algorithm in order to fulfill
quadrant requirements, since it would be anyway equivalent to an
aggregation by mixing.

\section{Quadrant requirements and duality}

We have defined a quadrant $A$ as a finite intersection of affine
half-spaces in $\mathbb{R}^n$. Now we move from $\mathbb{R}^n$ to
$\mathfrak{M}:=\mathfrak{M}(\mathfrak{B}_{n},\mathbb{R})$ the set of
finite real-valued Borel-measures (not necessarily positive). For
$\mu_{1}, \mu_{2} \in \mathfrak{M}$ and $r \in \mathbb{R}$ we set
$(r \cdot \mu_{1})(A):= r \cdot \mu_{1}(A)$ and $(\mu_{1}+\mu_{2})(A):=
\mu_{1}(A) + \mu_{2}(A)$ for all $A \in \mathfrak{B}_{n}$, so
$\mathfrak{M}$ becomes a vector space over $\mathbb{R}$. The subset
of probability measures is denoted by
$\mathfrak{M}_{prob}:=\mathfrak{M}_{prob}(\mathfrak{B}_n,\mathbb{R})$.

We observe that the map evaluating at $A$
\[
\lambda_A: \mathfrak{M} \rightarrow \mathbb{R}, \mu \mapsto \mu(A)
\]
is a continuous linear form on $\mathfrak{M}$.

The notion of affine half-space and thus quadrants in definition
\ref{quadrants} carries easily over to arbitrary (not necessarily
finite dimensional) real vector spaces, and the notion of $\mu$
satisfying a quadrant requirement can easily be generalized from
probability measures to any real-valued measure: $\mu$ satisfies the
quadrant requirement $(A,p)$ if $\mu(A) \geq p$.

\begin{definition}
We say that $\mu$ belongs to the \emph{acceptance set} of quadrant
requirement $(A,p)$ if $\mu$ satisfies $(A,p)$; we use the same
terminology in case of an arbitrary set of quadrant requirements.
\end{definition}

The condition for belonging to the acceptance set of a single
quadrant requirement can be rewritten as follows
\[
\lambda_A(\mu) \geq p
\]
From this, we immediately obtain the following

\begin{lemma}\label{acceptancelemma}
The acceptance set of a single quadrant requirement is an affine
half-space in $\mathfrak{M}$.
\end{lemma}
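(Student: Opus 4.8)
The plan is to observe that this lemma is essentially a matter of unwinding the definitions just set up, so the proof should be very short. First I would rewrite the acceptance set explicitly. By the generalized convention that $\mu$ satisfies $(A,p)$ iff $\mu(A)\geq p$, together with the identity $\lambda_A(\mu)=\mu(A)$, the acceptance set of the quadrant requirement $(A,p)$ is precisely
\[
\{\mu\in\mathfrak{M}:\lambda_A(\mu)\geq p\}=\lambda_A^{-1}([p,\infty)).
\]

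Next I would recall the definition of an affine half-space transported from Definition~\ref{quadrants} to the vector space $\mathfrak{M}$: a set of the form $\lambda^{-1}([c,\infty))$, where $\lambda$ is a \emph{non-zero} linear form on $\mathfrak{M}$ and $c\in\mathbb{R}$. Setting $\lambda=\lambda_A$ and $c=p$, the set displayed above is of exactly this shape, so it only remains to verify that $\lambda_A$ is a non-zero linear form. Linearity (indeed continuity) was already noted when $\lambda_A$ was introduced.

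The one point requiring an actual argument --- and the only place where a hypothesis enters --- is that $\lambda_A\neq 0$. Here I would use that a quadrant is by definition non-empty, so I may pick some $d\in A$; then the Dirac measure $\delta_d$ lies in $\mathfrak{M}$ and satisfies $\lambda_A(\delta_d)=\delta_d(A)=1\neq 0$, whence $\lambda_A$ does not vanish identically. This completes the identification of the acceptance set as an affine half-space.

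I do not expect any genuine obstacle: the substance of the statement was front-loaded into the remark that $\lambda_A$ is a continuous linear form and into the generalization of affine half-spaces to arbitrary real vector spaces. The only subtlety worth flagging is that ``half-space'' requires the defining functional to be non-zero, so one must not skip the verification that $\lambda_A\neq 0$; this is exactly what non-emptiness of the quadrant $A$ guarantees via the test measure $\delta_d$.
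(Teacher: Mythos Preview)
Your argument is correct and follows exactly the paper's route: the paper simply records that the acceptance condition reads $\lambda_A(\mu)\geq p$ and declares the lemma immediate, without a separate proof. Your write-up is in fact more careful than the paper's, since you explicitly verify $\lambda_A\neq 0$ via a Dirac measure at a point of the (non-empty) quadrant, a detail the paper passes over in silence.
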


Therefore, while a quadrant is a finite intersection of affine
half-spaces in $\mathbb{R}^n$, the acceptance sets of a finite set
of quadrant requirements is an intersection of finitely many affine
half-spaces in $\mathfrak{M}$.

\begin{theorem}\label{acceptancethm}
The acceptance set of an arbitrary set of quadrant requirements is
closed and convex, both in $\mathfrak{M}_{prob}$ and in
$\mathfrak{M}$.
\end{theorem}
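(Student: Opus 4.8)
The plan is to write the acceptance set as an intersection of single-requirement acceptance sets and then invoke the fact that both closedness and convexity are preserved under arbitrary intersections. First I would observe that a measure $\mu$ belongs to the acceptance set of a set $\mathcal{Q}$ of quadrant requirements precisely when it satisfies every individual requirement $(A,p)\in\mathcal{Q}$; that is, the acceptance set equals
\[
\bigcap_{(A,p)\in\mathcal{Q}}\{\mu\in\mathfrak{M}:\lambda_A(\mu)\geq p\}.
\]
By Lemma~\ref{acceptancelemma}, each set in this intersection is an affine half-space in $\mathfrak{M}$, so the whole acceptance set is an intersection of affine half-spaces.

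Next I would verify that each half-space $\{\mu:\lambda_A(\mu)\geq p\}$ is both convex and closed in $\mathfrak{M}$. Convexity is immediate from the linearity of $\lambda_A$, since the set is the preimage $\lambda_A^{-1}([p,\infty))$ of a convex subset of $\mathbb{R}$ under a linear map. Closedness follows because $\lambda_A$ is continuous and $[p,\infty)$ is closed, so the preimage is closed. Arbitrary intersections of convex sets are convex and arbitrary intersections of closed sets are closed; hence the acceptance set is convex and closed in $\mathfrak{M}$, irrespective of whether $\mathcal{Q}$ is finite or infinite.

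To obtain the statement in $\mathfrak{M}_{prob}$, I would equip this subset with the subspace topology and note that the acceptance set in $\mathfrak{M}_{prob}$ is exactly the trace on $\mathfrak{M}_{prob}$ of the acceptance set in $\mathfrak{M}$. Convexity transfers because $\mathfrak{M}_{prob}$ is itself convex---a convex combination of probability measures is again a probability measure---so the intersection of the convex acceptance set with $\mathfrak{M}_{prob}$ remains convex. Closedness transfers because, by definition of the relative topology, the intersection of a set closed in $\mathfrak{M}$ with $\mathfrak{M}_{prob}$ is closed in $\mathfrak{M}_{prob}$.

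The argument is essentially an assembly of elementary facts, so I do not anticipate a genuine obstacle. The one point deserving attention is that the theorem permits an \emph{arbitrary}, possibly infinite, set of quadrant requirements, whereas Lemma~\ref{acceptancelemma} and the surrounding discussion concern single or finite collections; the resolution is precisely that both closedness and convexity survive arbitrary intersections, so the passage from finite to infinite families costs nothing.
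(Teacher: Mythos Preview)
Your argument is correct and, for the case of $\mathfrak{M}$, identical in substance to the paper's: both write the acceptance set as an intersection of the affine half-spaces $\lambda_A^{-1}([p,\infty))$ supplied by Lemma~\ref{acceptancelemma} and conclude closedness and convexity from stability under intersection.

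For $\mathfrak{M}_{prob}$ you take a slightly different route. You pass to the subspace topology and observe that the trace of a closed set is relatively closed, and that the intersection of the convex acceptance set with the convex set $\mathfrak{M}_{prob}$ is convex. The paper instead observes that $\mathfrak{M}_{prob}$ is \emph{itself} cut out by linear conditions---namely $\lambda_A(\mu)\geq 0$ for all $A\in\mathfrak{B}_n$ together with $\lambda_{\mathbb{R}^n}(\mu)=1$---so that the acceptance set within $\mathfrak{M}_{prob}$ is again an intersection of affine half-spaces in $\mathfrak{M}$. Both arguments are sound; yours is the more topologically routine one, while the paper's has the mild bonus of exhibiting the $\mathfrak{M}_{prob}$ acceptance set explicitly as a ``quadrant'' in $\mathfrak{M}$, which fits the surrounding duality narrative. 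Your explicit remark that closedness and convexity survive \emph{arbitrary} (not just finite) intersections is a useful clarification the paper leaves implicit.
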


\begin{proof}
In the case of $\mathfrak{M}$, the assertion follows from the fact
that due to Lemma~\ref{acceptancelemma} any acceptance set is an
intersection of affine half-spaces. To extend it to
$\mathfrak{M}_{prob}$, we note that the condition for a real-valued
measure to be a probability measure can also be written as follows
as an intersection of half-spaces:
\[
\forall A \in \mathfrak{B}_n: \lambda_A(\mu)\geq 0,
\lambda_{\mathbb{R}^n}(\mu)=1
\]
Hence also in this case, the acceptance sets are intersections of
affine half-spaces and therefore convex.
\end{proof}

\begin{remark}
Theorem~\ref{acceptancethm} means that a supervisor may only use
quadrant requirements to describe its regulatory requirements if the
set of distributions which is accepted by the supervisor is closed
and convex.
\end{remark}

Conversely, one may ask the question whether any convex set of
measures can be written as acceptance set of a set of quadrant
requirements. For this purpose, we need to study the dual space of
$\mathfrak{M}$, and it turns out that it is too rich in the sense
that not all closed convex sets may be described as quadrant
requirements in the sense of our definition. However, we can
generalize the notion of quadrant requirement as follows to obtain a
result.

\begin{definition}
Let $\mu \in \mathfrak{M}$, $g: \mathbb{R}^n \rightarrow \mathbb{R}$
a function such that its $\mu$-integral exists, and $p \in
\mathbb{R}$. A \emph{generalized requirement} defined by $(g,p)$ is
a condition of the form
\[
\int g d\mu \geq p
\]
and $\mu$ is said to belong to the acceptance set of $(g,p)$; we use
similar terminology in the case of a set of generalized
requirements.
\end{definition}

\begin{remark}
One can easily see that quadrant requirements are a special case, by
taking $g=\chi_A$ to be the characteristic function of a quadrant
$A$.
\end{remark}

\begin{theorem}
Any closed convex subset of $\mathfrak{M}$ can be described as
acceptance set of a set of generalized requirements.
\end{theorem}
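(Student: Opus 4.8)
The plan is to realize the given set $C$ as an intersection of half-spaces by a Hahn--Banach separation argument, after pinning down the continuous dual of $\mathfrak{M}$ so that every separating functional is an integration functional $\mu \mapsto \int g\, d\mu$, i.e. precisely the left-hand side of a generalized requirement. Once the dual is identified with the functionals $\lambda_g$, the statement reduces to the familiar fact that a closed convex set is the intersection of the closed half-spaces containing it.

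First I would fix the locally convex topology on $\mathfrak{M}$. The natural choice is the weak topology $\sigma(\mathfrak{M},B)$ induced by the pairing of $\mathfrak{M}$ with the space $B$ of bounded measurable functions $g:\mathbb{R}^n \rightarrow \mathbb{R}$ via $\langle \mu,g\rangle = \int g\, d\mu$; this integral is finite for every finite signed measure because $g$ is bounded, so each $\lambda_g:\mu \mapsto \int g\, d\mu$ is a well-defined linear form, and the acceptance set of $(g,p)$ is exactly the half-space $\{\mu : \lambda_g(\mu)\geq p\}$, in analogy with Lemma~\ref{acceptancelemma}. The pairing separates points of $\mathfrak{M}$ since $\chi_A \in B$ for every $A \in \mathfrak{B}_n$, so $\int g\, d\mu = 0$ for all $g \in B$ forces $\mu(A)=0$ for all $A$, i.e. $\mu=0$. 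By the standard description of the dual of a weak topology, the continuous dual of $(\mathfrak{M},\sigma(\mathfrak{M},B))$ is then exactly $\{\lambda_g : g \in B\}$. I would also record that, for \emph{convex} sets, closedness does not depend on which topology compatible with this dual pairing is used, so the hypothesis ``closed convex'' is unambiguous.

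Next I would carry out the separation and assembly. Let $C \subseteq \mathfrak{M}$ be closed and convex and let $\mu_0 \notin C$. Since $\{\mu_0\}$ is compact convex and $C$ is closed convex and disjoint from it, the Hahn--Banach separation theorem yields a continuous linear form---necessarily of the form $\lambda_g$ with $g \in B$---and a scalar $p \in \mathbb{R}$ with
\[
\lambda_g(\mu_0) < p \leq \lambda_g(\mu) \qquad \text{for all } \mu \in C.
\]
Read through the generalized requirement $(g,p)$, its acceptance set $H_{g,p}=\{\mu:\int g\, d\mu \geq p\}$ contains $C$ but excludes $\mu_0$; half-spaces of the opposite orientation are covered as well since $\lambda_{-g}=-\lambda_g$, so nothing is lost by writing $\geq$. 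Letting $(g_{\mu_0},p_{\mu_0})$ denote such a requirement for each $\mu_0 \in \mathfrak{M}\setminus C$ and $\mathcal{R}$ the collection of all of them, the acceptance set of $\mathcal{R}$ is $\bigcap_{\mu_0 \notin C} H_{g_{\mu_0},p_{\mu_0}}$; this intersection contains $C$ by construction, while every $\mu_0 \notin C$ is excluded by its own requirement, so it equals $C$ exactly.

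The main obstacle is the dual-space step: the theorem hinges entirely on the separating functional being representable as integration against a function $g$ whose $\mu$-integral exists, and this is precisely where ordinary quadrant requirements fail, since the functionals $\lambda_A$ span only the indicator directions and do not exhaust the dual, whereas allowing arbitrary bounded measurable $g$ makes the dual coincide with the full family of separating functionals. Some care is needed to keep $g$ bounded (or otherwise integrable against every $\mu \in \mathfrak{M}$) so that each generalized requirement is well-defined in the sense of the definition; restricting the separating functionals to $B$ achieves exactly this.
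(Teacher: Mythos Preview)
Your proof is correct and follows essentially the same route as the paper: identify the continuous linear functionals on $\mathfrak{M}$ with integration against bounded functions, then invoke Hahn--Banach separation to write the closed convex set as an intersection of half-spaces, each of which is the acceptance set of a generalized requirement. The only notable difference is that you work explicitly with the weak topology $\sigma(\mathfrak{M},B)$ induced by bounded measurable functions---which makes the dual identification automatic---whereas the paper cites a duality result identifying the dual with $L^\infty(\mathbb{R}^n)$ relative to Lebesgue measure; your formulation is arguably cleaner, since it sidesteps the issue of functions defined only Lebesgue-a.e.\ being paired with measures singular to Lebesgue.
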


\begin{proof}
It is a known fact that the dual space of $\mathfrak{M}$ can be
identified with $L^\infty(\mathbb{R}^n)$ (with respect to the
Lebesgue measure), hence any linear form on $\mathfrak{M}$ can be
expressed by integrating over a function $g \in
L^\infty(\mathbb{R}^n)$ (cf. \cite{LA93}, Chapter VII, Theorem~2.2
in conjunction with Corollary~4.3).

Similarly, by the Hahn-Banach separation theorem (cf. \cite{LA93},
Appendix to Chapter IV, Theorem 1.2), any convex subset of a Banach
space can be expressed as an intersection of affine half-spaces.

Taking these two facts together, the theorem follows immediately.
\end{proof}

\begin{remark}
We see that a supervisor may have to use generalized requirements in
case the acceptance set is more complicated in the sense that
inspection of weights on (possibly infinitely many) quadrants is not
sufficient to decide whether a distribution is accepted or not. In
this case, as a first step between quadrant requirements and
arbitrary generalized requirements, it might be useful to try
whether it is possible to get along with considering generalized
requirements defined by step functions $g$ instead of characteristic
functions of quadrants.
\end{remark}

\section{Conclusion}

We introduced and discussed scenarios and different methods for aggregating them at risk factor level. We defined quadrant requirements and investigated their relationship and compatibility with different scenario aggregation methods. It turns out that aggregation methods based on contractive mappings, and especially point mass aggregation, play an important role in this respect. Furthermore, we studied generalized requirements and showed that they may be used to describe any closed convex acceptance set of risk factor distributions.

\begin{appendix}
\section{Appendix: Proof of Lemma~\ref{convex sets}}

Let $S\subseteq \mathbb{R}^n$. We define the asymptotic cone of $S$, denoted by
$\mathrm{Cone}(S)$. It consists of all $x\in \mathbb{R}^n$ such that $y+tx\in
S$ for all $t>0$ and $y \in S$. Some more background
of the asymptotic cone together with some useful properties can be found in \cite{HL01}, p. 39.

Initially, we note that, for a fixed dimension $n$, it suffices to show the lemma for \emph{closed} convex sets.
Indeed, assume the lemma is true in this case, and let $S$ be an arbitrary convex set.

$\bar{S}$ is closed and convex, hence by assumption, we can apply the lemma to $\bar{S}$. Therefore, $\bar{S}$ is either two-sided constrained, or it contains arbitrarily large balls.
In case $\bar{S}$ is two-sided constrained, then so is $S$, so we are done. Otherwise, $\bar{S}$ contains balls $B_{2R}$ of radius $2R$ for any $R>0$.
We claim that the corresponding smaller ball $B_R$ of radius $R$ around the same point is then contained in $S$. Indeed, we can find finitely many points $z_j\in B_{2R}$
such that the convex ball of radius $2R$ is contained in the convex hull of these points. for each point, we can find $z_j'\in S$ such that $\vert z_j - z_j' \vert < \epsilon$.
By choosing $\epsilon>0$ sufficiently small, $B_R$ is also contained in the convex hull of the points $z_j'$, which itself is a subset of $S$ by convexity.
Hence $S$ contains a ball of radius $R$.

Next, we proceed to show the lemma in dimension $n$, and from what we have seen above, we may assume that $S$ is closed. We proceed by induction on $n$, the start of the induction is obvious and has been
described in Example~\ref{linearformsonR}. So we can assume that the lemma is true in all dimensions $<n$, and for arbitrary convex sets (not necessarily closed).

We choose a maximal linear independent subset of $\mathrm{Cone}(S)$, denoted by $x_1, \ldots x_k$.

If $k=n$, we are finished, because $S$ then contains arbitrarily large, non-degenerate n-simplices, and thus arbitrarily large balls.

Assume $k<n$. Let $L:\mathbb{R}^n\rightarrow \mathbb{R}^{n-k}$ be a linear map such that the kernel is the subspace generated by $v_1,\ldots,v_k$.
Define $S'=L(S)$. $S'$ is also a convex set.

If $S'$ is two-sided constrained, then so is $S$, because if $\lambda$ is a non-zero linear form on $\mathbb{R}^{n-k}$ such that $\lambda(S')$ is bounded,
then $\lambda':=\lambda\circ L$ is a non-zero linear form on $\mathbb{R}^n$ such that $\lambda'(S)$ is bounded.

Hence we can assume $S'$ is not two-sided constrained.
As $S'$ lies in a vector space of dimension $<n$, we can apply the lemma to $S'$ by induction assumption, concluding that $S'$ contains arbitrarily large balls, and thus an arbitrarily large $n-k$ cube spanned
by points $y_1,\ldots,y_{l}\in S'$. Assume $x_j\in S$ is mapped to $y_j$ under $L$. Then by the invariance of the asymptotic cone (cf. Proposition 2.2.1 in \cite{HL01};
note that we are now making use of the fact that $S$ is closed) we see that
$S$ contains a one-sided cylinder over the cube spanned by the points $x_1,\ldots,x_{l}$ and the directions $v_1,\ldots,v_k$. As we can make the cube
arbitrarily large, it is easily verified that this cylinder, and thus $S$, contains arbitrarily large $n$-cubes, and thus arbitrarily large balls.

For elaborating the last step, we have make use of the fact that by the equivalence of norms on $\mathbb{R}^n$, the property of containing large balls (or cubes) is preserved under any
linear isomorphism, and we may thus without loss of generality assume that the $v_k$ are the standard unit vectors, in which case the assertion follows immediately.

$\bar{S}$ is closed and convex, hence from what we have just proved, $\bar{S}$ is either two-sided constrained, or it contains arbitrarily large balls.
In case $\bar{S}$ is two-sided constrained, then so is $S$, so we are done. Otherwise, $\bar{S}$ contains balls $B_{2R}$ of radius $2R$ for any $R>0$.
We claim that the corresponding smaller ball $B_R$ of radius $R$ around the same point is then contained in $S$. Indeed, we can find finitely many points $z_j\in B_{2R}$
such that the convex ball of radius $2R$ is contained in the convex hull of these points. for each point, we can find $z_j'\in S$ such that $\vert z_j - z_j' \vert < \epsilon$.
By choosing $\epsilon>0$ sufficiently small, $B_R$ is also contained in the convex hull of the points $z_j'$, which itself is a subset of $S$ by convexity.
Hence $S$ contains a ball of radius $R$.

\end{appendix}

\section*{Acknowledgements}
   The authors wish to thank T. Cooke, D. Finnis, C. Lang and M. Schmutz for their useful feedback and very helpful hints.

====\\
Andreas Haier \\
Milit\"arstrasse 44 \\
CH-3014 Bern \\
Switzerland \\
\\
Thorsten Pfeiffer \\
Fellenbergstrasse 17 \\
CH-3012 Bern\\
Switzerland


\begin{thebibliography}{KLST71}

\bibitem[GR81]{GR81}
M. Gromov: \emph{Groups of polynomial growth and expanding maps}, Publications mathématiques de l'I.H.É.S., tome 53 (1981), p. 53-78 \\

\bibitem[HL01]{HL01}
J.-B. Hiriart-Urruty, C. Lemaréchal: \emph{Fundamentals of convex analysis}, Springer (2001)

\bibitem[LA93]{LA93}
S. Lang: \emph{Real and functional analysis}, Graduate Texts in
Mathematics 142, Third edition (1993)

\bibitem[SZ01]{SZ01}
J. Szczepanski, \emph{A new result on the Nirenberg problem for expanding maps}, Nonlinear Analysis 43 (2001), p. 91-99 \\

\end{thebibliography}
\end{document}